\documentclass[sigconf]{acmart}

\usepackage{textcomp}
\usepackage{xcolor}
\usepackage{url}


\usepackage{subfig}

\usepackage{graphicx}
\usepackage{amsthm}
\usepackage{multicol}
\usepackage{here}
\usepackage[linesnumbered,ruled,vlined]{algorithm2e}

\SetCommentSty{commentsty}
\DontPrintSemicolon
\usepackage{afterpage}
\usepackage{enumerate}

\def\WR{\stackrel{wr}{\to}}
\def\VORW{\stackrel{\ll(rw)}{\to}}
\def\VOWW{\stackrel{\ll(ww)}{\to}}

\newtheorem{definition}{Definition}

\newtheorem{theorem}{Theorem}

\settopmatter{printacmref=false}

\begin{document}

\title{Scheduling Space Expander: An Extension of Concurrency Control for Data Ingestion Queries}


\author{Sho Nakazono}
\affiliation{%
    \institution{
        NTT Computer and Data Science Laboratories
    }
    \city{}
    \state{Tokyo}
    \country{Japan}
}
\email{syou.nakazono.nu@hco.ntt.co.jp}

\author{Hiroyuki Uchiyama}
\affiliation{%
    \institution{
        Recruit Co., Ltd.
    }
    \city{}
    \state{Tokyo}
    \country{Japan}
}
\email{hiroyuki-uchiyama@recruit.co.jp}

\author{Yasuhiro Fujiwara}
\affiliation{%
    \institution{
        NTT Communication Science Laboratories
    }
    \city{}
    \state{Kanagawa}
    \country{Japan}
}
\email{yasuhiro.hujiwara.kh@hco.ntt.co.jp}

\author{Hideyuki Kawashima}
\affiliation{%
    \institution{
        Faculty of Environment and Information Studies, Keio University
    }
    \city{}
    \state{Kanagawa}
    \country{Japan}
}
\email{river@sfc.keio.ac.jp}

\renewcommand{\shortauthors}{Sho, et al.}

\begin{abstract}
  With the continuing advances of sensing devices and IoT applications, database systems needs to process data ingestion queries that update the sensor data frequently.
To process data ingestion queries with transactional correctness, we propose a novel protocol extension method, scheduling space expander (SSE).
The key idea of SSE is that we can safely omit an update if the update becomes outdated and unnecessary.
SSE adds another control flow to conventional protocols to test the transactional correctness of an erasing version order, which assumes that a transactions' updates are all outdated and unnecessary.
In addition, we present an optimization of SSE called epoch-based SSE (ESSE), which generates, tests, and maintains the erasing version order more efficiently than SSE.
Our approach makes the performance of data ingestion queries more efficient.
Experimental results demonstrate that our ESSE extensions of Silo and MVTO improve 2.7$\times$ and 2.5$\times$ performance on the TATP benchmark on a 144-core machine, and the extensions achieved performance comparable to that of the original protocol for the TPC-C benchmark.

\end{abstract}

\pagestyle{plain}

\maketitle

\section{Introduction}
\label{sec:introduction}

\begin{figure}[t]
    \centerline{
        \subfloat{
            \includegraphics[width=0.5\textwidth,clip, trim={3cm 0 3cm 0}]{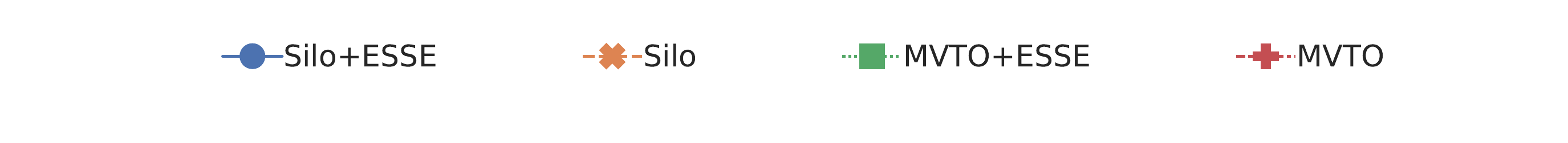}
        }
    }
    \vspace{-22pt}
    \addtocounter{subfigure}{-1}
    \centerline{
        \subfloat{
            \includegraphics[width=0.35\textwidth]{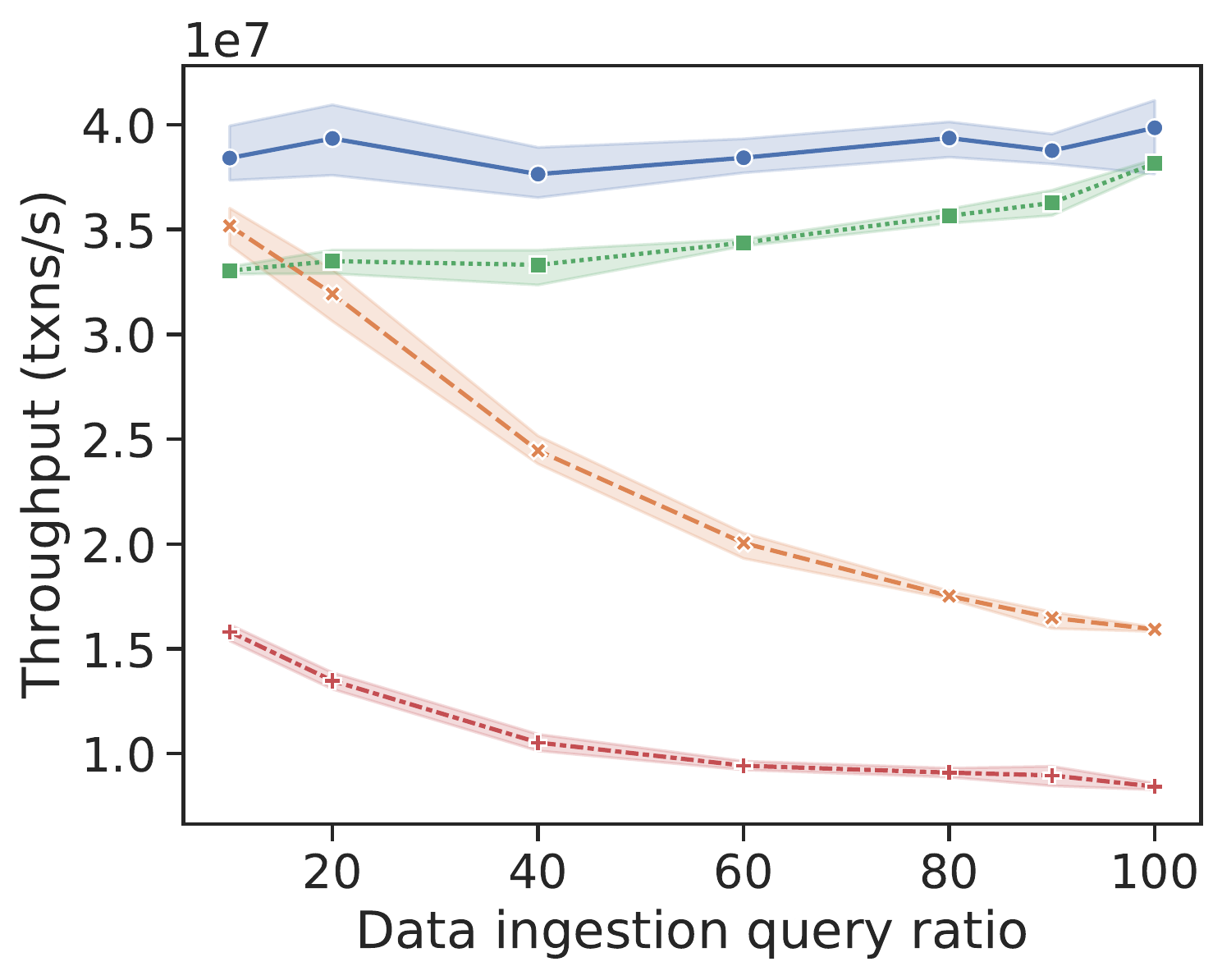}
        }
    }
    \vspace{-4pt}
    \caption{
        TATP benchmark throughput with respect to the data ingestion query ratio. The percentage of \texttt{UPDATE\_LOCATION} in TATP was varied to represent the data ingestion query ratio, and the throughput was measured with 144 worker threads. The original percentage is 14\%. The band of each line indicates the variance in five runs of the experiment.}
    \vspace{-8pt}

    \label{fig:TATP-VaryingDIQueryRatio}
\end{figure}

Modern internet of things (IoT) or mobile telecommunication services use billions of sensors, connected cars, or mobile devices to ingest the data from these sources continuously through a network.
These services read the real-time ingested data to operate real-world actuators such as the assembly factory machines, robotic highway construction markers, or location-based digital signage~\cite{DBLP:conf/ipsn/DemirbasSH08}.
This pair of data ingestion queries and real-time operations is becoming common process in network service providers.
These queries have long been operated in non-transactional systems such as streaming databases~\cite{DBLP:conf/edbt/GroverC15,DBLP:conf/vldb/TuLPY06}.
However, in recent years, the importance of \textit{the transactional correctness} has been studied~\cite{meehan2017data,DBLP:journals/pvldb/WangC19,10.1145/2882903.2899406}, since reading inconsistent or outdated data causes a malfunction, accident, or opportunity loss~\cite{DBLP:conf/ipsn/DemirbasSH08,meehan2017data, DBLP:conf/icc/TangGGLW11} in real-world actuators.

\begin{table*}[ht]
    \centering
    \begin{tabular}{cccc}
        \toprule
        Protocol                                              & Transactional correctness      & Omit write operations & Version storage \\
        \midrule
        Timestamp Ordering (T/O) with Thomas Write Rule (TWR) & Not strictly serializable      & \textbf{Yes}          & 1VCC            \\
        Silo OCC                                              & \textbf{Strictly serializable} & No                    & 1VCC            \\
        Cicada MVTO                                           & Not strictly serializable      & No                    & MVCC            \\
        Silo + ESSE                                           & \textbf{Strictly serializable} & \textbf{Yes}          & 1VCC            \\
        MVTO + ESSE                                           & \textbf{Strictly serializable} & \textbf{Yes}          & MVCC            \\
        \bottomrule
    \end{tabular}
    \caption{Differences between conventional protocols and our proposal}
    \label{tab:comparison}
\end{table*}

The Telecom Application Transaction Processing Benchmark (TATP)~\cite{TATP,DBLP:conf/sigmod/0001MK15} is an example of the transactional workload which includes data ingestion and real-time operation queries.
In TATP, tremendous mobile devices update the subscribers' current locations or profiles.
Concurrently, telecom base stations read the ingested data as the online transaction to operate the migration of Home Location Register databases.
A key feature of data ingestion queries in TATP is ~\textbf{blind updates}.
It is not read-modify-write; each mobile device updates its own location information, but it does not need to read the previous information written by the device itself.
In addition, it is not inserts; we cannot even handle the query as ``unique timestamped inserts''.
Although this approach is suitable for time-series analytical systems, for real-time operations, we need to guarantee that each data item has only one latest version, and that the consistent results hold for multiple data items~\cite{10.1145/2882903.2899406,meehan2017data,DBLP:conf/ipsn/DemirbasSH08}.

Blind update is the most significant difference between our intended applications and traditional applications.
Traditional workloads such as TPC-C~\cite{10.5555/1946050.1946051}, which models wholesale warehouse management, also generate a large number of writes in the form of new orders, but these workloads request inserts, not blind updates.
Inserts are scalable by partitioning since they write distinct items, but updates must write to the same data item in principle, and thus it is difficult to scale.
In our experiments, as the rate of data ingestion queries increases, existing protocols have exhibited degraded performance for TATP benchmark as shown in Figure~\ref{fig:TATP-VaryingDIQueryRatio}.
This is because they have to use a lock mechanism to order the update requests serially into the same data item, to preserve the transactional correctness.
Such serial execution of updates causes performance degradation.
If the throughput becomes less than the data velocity, we cannot operate the system and services of IoT/Telecom applications.

To process massive amounts of data in real-time, there exist methods to \textbf{omit} updates without lock mechanisms by using load shedding or backpressure\cite{DBLP:conf/vldb/TuLPY06,DBLP:journals/tmc/TahirYM18}.
However, these methods do not have enough guarantee of the transactional correctness.
For example, when a sensor updates information of two nearby moving objects but the system omits one of them partially, actuators only obtain one moving object data and thus some actuators might cause an accident such as a collision.
To guarantee the transactional correctness, databases need to use concurrency control (CC) protocols.
CC protocols handles the interleaving of concurrent operations by ensuring the transactional correctness as two essential properties: serializability (guarantee of consistent data snapshot)~\cite{Weikum2001TransactionalRecovery} and linearizability (guarantee of non-stale data snapshot)~\cite{Herlihy1990Linearizability:Objects}.
In theory, we can decide whether an omission of an update satisfies the correctness by finding \textit{version order}~\cite{Bernstein1983MultiversionAlgorithms} related to all update.
If a version order found and its correctness is verified by a protocol, we can skip locking, buffer updates, and persistent logging while preserving the transactional correctness.
However, to the best of our knowledge, no existing methods leverage the notion of version orders.
It is because the na\"ive approach requires the expensive acyclicity checking of all possible dependency graphs based on all transactions and all possible version orders, which has been proven as the NP-Complete problem~\cite{Weikum2001TransactionalRecovery}.

In this paper, we propose a versatile protocol extension method, \textbf{scheduling space expander (SSE)}.
The contributions of SSE are threefold.

    {\bf C1:~} SSE reduce the verification cost in a polynomial-time by testing only a single \textbf{erasing version order} which is generated by SSE's data structure.
    With erasing version order, the verification needs only for a single subgraph of concurrent transactions.
    SSE also keeps a version order and its testing algorithm by conventional protocols; if an erasing version order does not found or failed correctness testing, then SSE delegates the control flow to conventional protocols.
    Thus, SSE can omit updates but does not directly abort any transaction.
    It indicates that SSE purely expands the scheduling space of conventional protocols.

    {\bf C2:~} We developed \textbf{epoch-based SSE (ESSE)} to introduce optimizations for SSE. With the epoch framework, ESSE reduces the number of target transactions in the SSE's verification and encodes the footprints of these target transactions into a 64-bits data structure. As a result, a protocol expanded by ESSE can generate an erasing version order and execute correctness testing in a latch-free manner. If a transaction passes the testing, it can omit its write operations with a bit of atomic operations such as Compare-And-Swap.

    {\bf C3:~} We demonstrated that SSE and ESSE are applicable to various protocols. We applied ESSE to two state-of-the-art 1VCC and MVCC protocols (Silo~\cite{Tu2013SpeedyDatabases} and Cicada-based MVTO~\cite{Lim2017Cicada:Transactions}); then, we evaluated the performance on the TATP, YCSB~\cite{Cooper2010BenchmarkingYCSB}, and TPC-C benchmarks in a 144-core environment. Figure~\ref{fig:TATP-VaryingDIQueryRatio} shows that ESSE successfully mitigates the performance problem of data ingestion queries and improved the performance on the TATP benchmark. This is because ESSE appropriately omitted unnecessary versions, as illustrated in the experiment (Figure~\ref{fig:tatp}c in Section~\ref{sec:tatp}).
    In Table~\ref{tab:comparison} we present the difference between conventional protocols and our proposal.
    ESSE extends various state-of-the-art protocols such as Silo and MVTO to enable omitting write operations while presesrving the transactional correctness.

The rest of this paper is organized as follows. Section~\ref{sec:preliminaries} describes the preliminaries.
Section~\ref{sec:omittable_theo_and_ex} proposes the notion of safely omittable transactions and its correctness testing algorithm.
Section~\ref{sec:sse} presents the scheduling space expander (SSE) scheme, which can generate safely omittable transactions.
Section~\ref{sec:esse} presents ESSE, which is the  optimization technique for SSE based on the epoch framework.
Section~\ref{sec:evaluation} reports our evaluation of the proposed scheme. Finally, Section~\ref{sec:relatedwork} describes related work, and Section~\ref{sec:conclusion} concludes the paper.

\newcommand{\rulesep}{\unskip\ \vrule\ }
\begin{figure*}[t]
  \subfloat[$x_2 <_v x_1$: violates serializability]
  {\includegraphics[clip,width=0.23\textwidth]{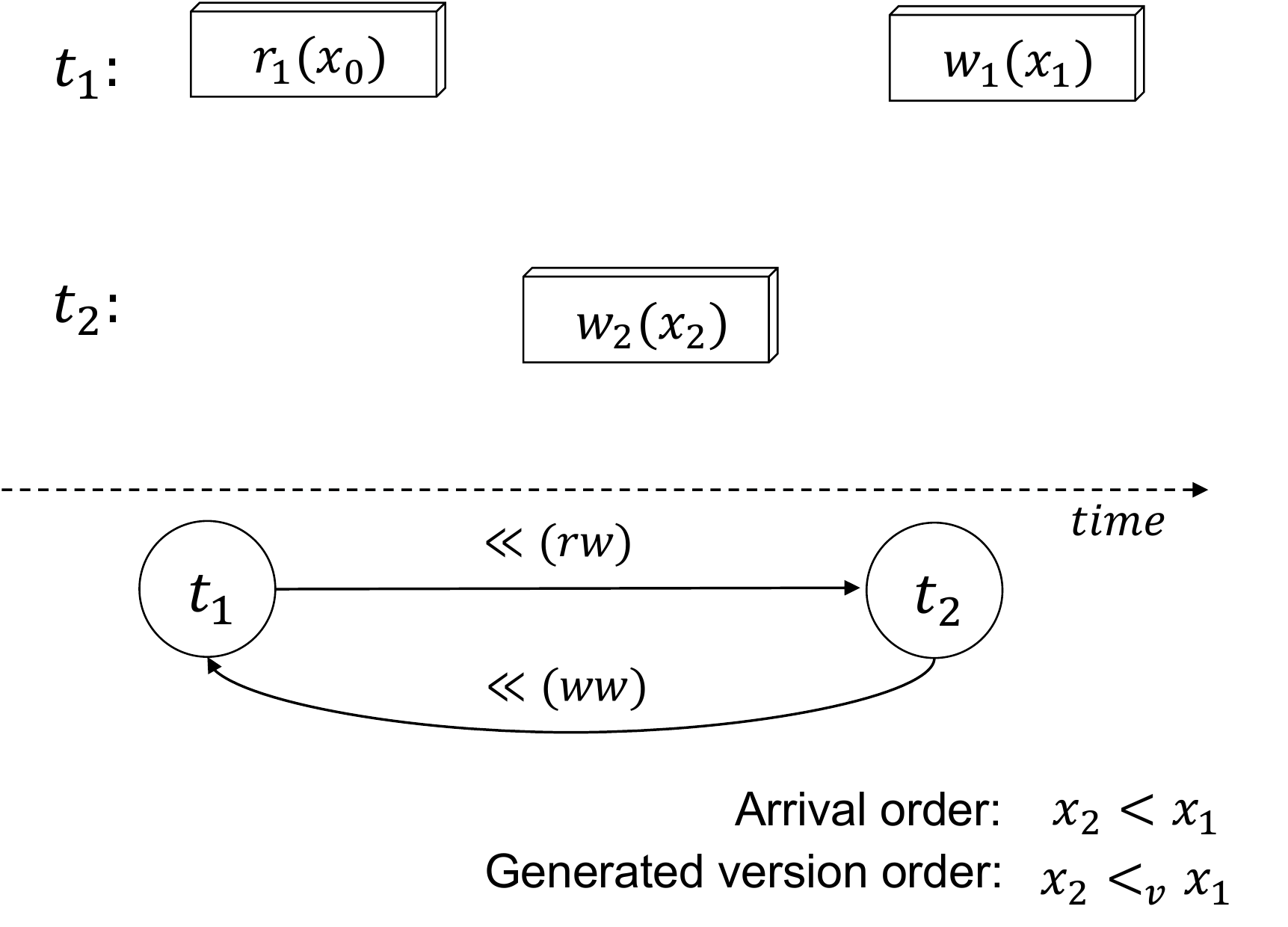}
  }
  \hfill
  \subfloat[$x_1 <_v x_2$: safely omittable]
  {\includegraphics[clip,width=0.23\textwidth]{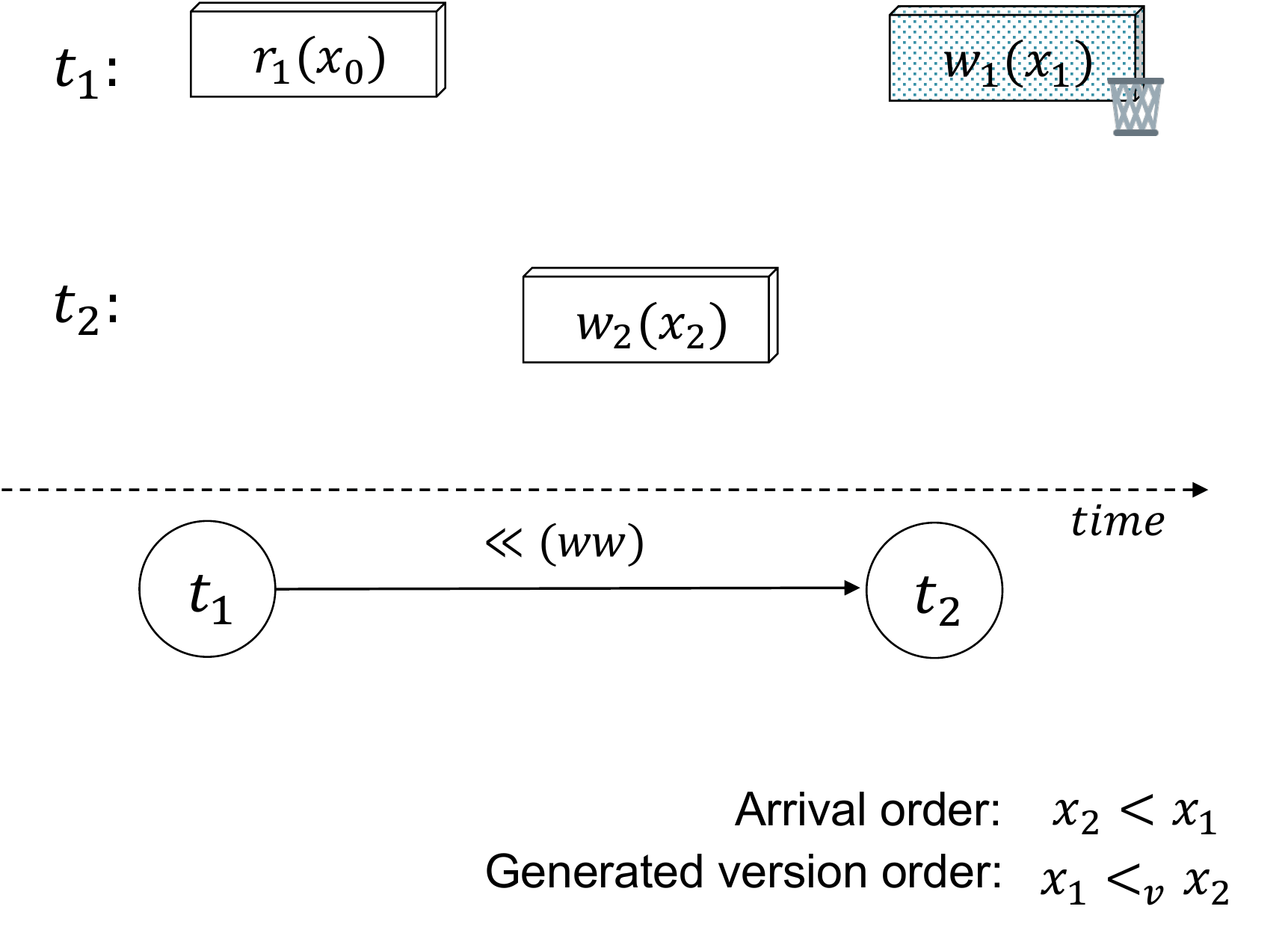}
  }
  \hfill
  \rulesep
  \hfill
  \subfloat[$x_2 <_v x_1$: violates linearizability ]
  {\includegraphics[clip,width=0.23\textwidth]{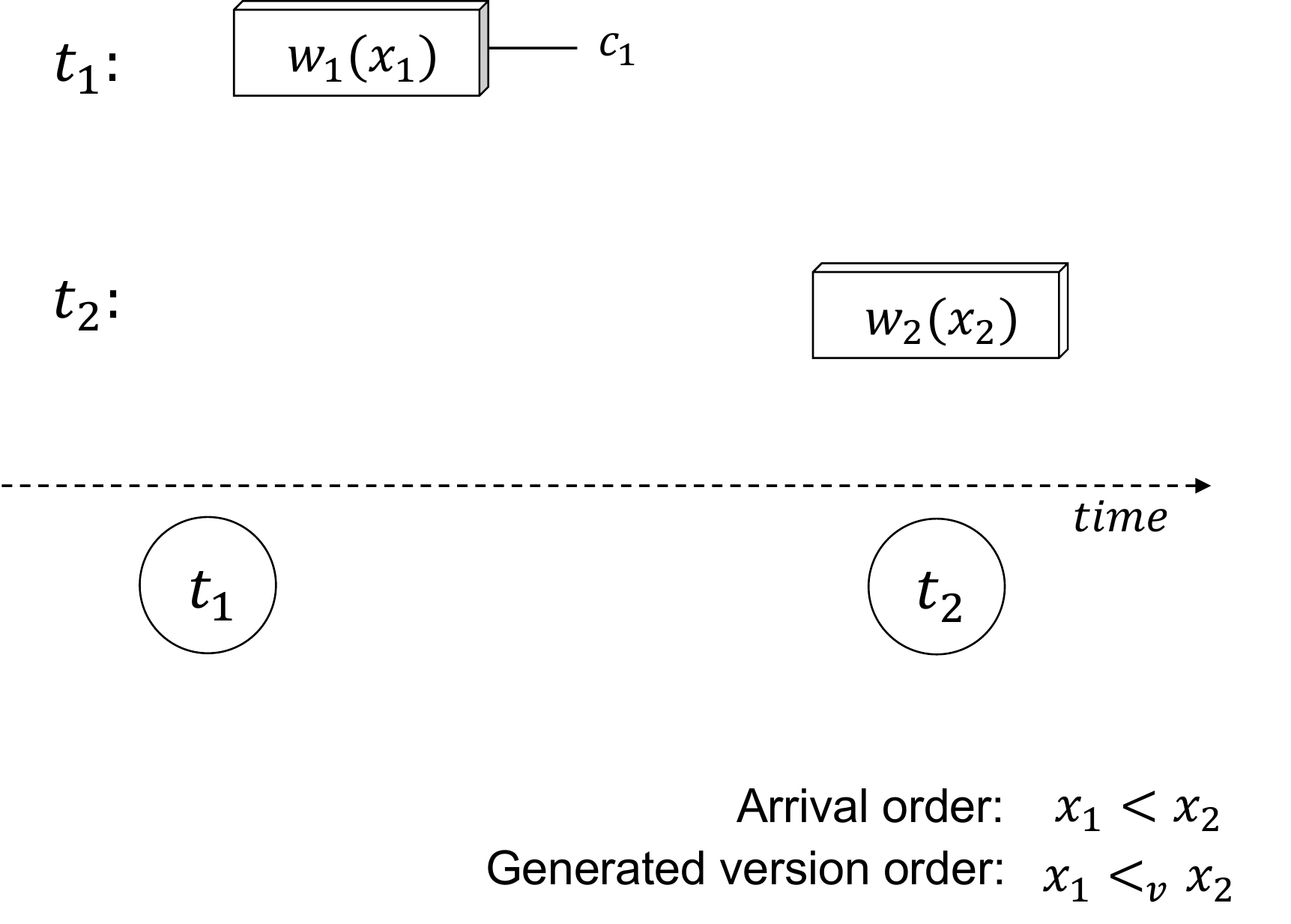}
  }
  \hfill
  \subfloat[$x_2 <_v x_1$: safely omittable]
  {\includegraphics[clip,width=0.23\textwidth]{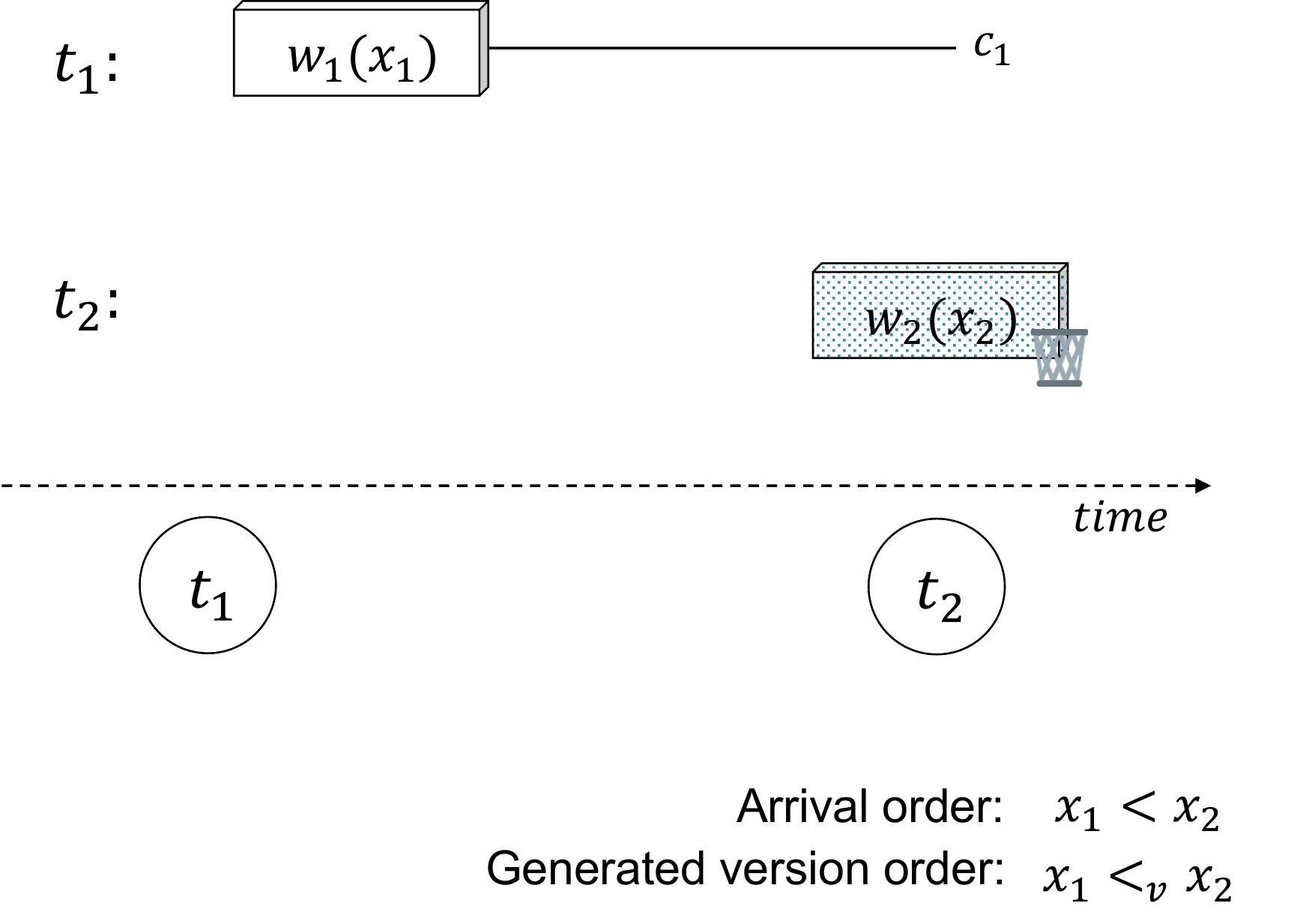}
  }

  \caption{Examples of schedules with version orders. The dotted operations depicted with wastebaskets omit its versions. Only cases (b) and (d) has safely omittable transactions (i.e., ensure the transactional correctness of unpublished transactions).}
  \label{fig:wrw_optim}
\end{figure*}

\section{Preliminaries}
\label{sec:preliminaries}

\begin{table}[t]
  \small
  \begin{tabular}{|c|l|} \hline
    Notation   & Definition                                       \\ \hline \hline
    $t_i$      & $i$-th transaction; an ordered set of operations \\
    $x_i$      & a version of data item $x$                       \\
    $w_i(x_i)$ & a write operation; $t_i$ writes $x_i$            \\
    $r_i(x_j)$ & a read operation; $t_i$ reads $x_j$              \\
    $c_i$ & a commit operation of $t_i$  \\
    $a_i$ & a abort operation of $t_i$  \\
    $rs_i$     & a set of versions read by $t_i$                  \\
    $ws_i$     & a set of versions written by $t_i$               \\
    \hline
  \end{tabular}
  \caption{Frequently used symbols and notations}
  \label{tab:symbols}
  \vspace{-14pt}
\end{table}

We mainly use the notations derived from Weikum et al.~\cite{Weikum2001TransactionalRecovery}. Table~\ref{tab:symbols} shows frequently used symbols and notations. Let $t_i, x_i, w_i(x_i)$, and $r_i(x_j)$ be $i$-th transaction, a version of data item $x$, a write operation, and a read operation, respectively.
$t_i$ has an ordered set of operations. $w_i(x_i)$ means $t_i$ writes $x_i$. $r_i(x_j)$ means $t_i$ reads $x_j$.
$ws_i$ and $rs_i$ represent the set of read and write operations of $t_i$, respectively.
$c_i$ and $a_i$ represents $t_i$'s termination operation; commits and aborts, respectively.

\subsection{Transactional Correctness}
We define the \textbf{transactional correctness} of our intended services as recoverability~\cite{Hadzilacos1988ASystems} and strict serializability~\cite{Herlihy1990Linearizability:Objects}.
We assume no transaction reads any value written by uncommitted transactions. This constraint ensures recoverability.
Strict serializability consists of serializability and linearizability~\cite{Herlihy1990Linearizability:Objects}. Serializability is necessary to provide consistent data snapshots for real-time operations of our intended IoT/Telecom applications, i.e., readings results without inconsistent or partial updates.
Among multiple notions of serializabilities, we use multiversion view serializability (MVSR) because this property provides the widest scheduling space~\cite{Weikum2001TransactionalRecovery}.
Linearizability refers to the wall-clock ordering constraints among non-concurrent transactions. If a database has the guarantee of linearizability, it ensures to prevent stale reads and writes, i.e., reading and writing of outdated versions.

\footnotetext[1]{In the original MVSG definition~\cite{Bernstein1983MultiversionAlgorithms}, all edges were denoted as $\to$, and the version orders for the schedule and for each data item had the same notation $\ll$; thus, when $x_i$ preceded $x_j$, it was denoted as ``$x_i \ll x_j$ in $\ll$.''}

Bernstein et al. proposed the multiversion serialization graph (MVSG)~\cite{Bernstein1983MultiversionAlgorithms} and proved that \textit{a schedule is MVSR if and only if there exists an acyclic MVSG}.
An MVSG has nodes for all committed transactions in the schedule.
The edges are added by a given schedule and a \textit{version order} for the schedule.
There are two types of version orders: version order for a data item and for a schedule.
A version order for a schedule is a union of all version orders for data items.
When $x_i$ precedes $x_j$ in a schedule, we denote $x_i <_v x_j$ and we refer it to as a version order for a data item.
With a version order for a schedule, the edges of MVSG are added for each triple of distinct operations $w_j(x_j)$, $r_i(x_j)$, and $w_k(x_k)$, where $t_i \ne t_k \ne t_j$.
There are three types of edges: (1) $t_j \WR{} t_i$ indicates that $t_j$ writes a version $x_j$ and $t_i$ reads it. (2) If $x_j <_v x_k$, then $t_i \VORW{} t_k$ indicates that $t_i$ reads a version $x_j$ and its version order precedes $t_k$'s version $x_k$. (3) Otherwise, $t_k \VOWW{} t_j$ indicates that $t_k$ writes a version $x_k$ and its version order precedes $t_j$'s version $x_j$.
Note that the original notation\footnotemark[1] of the MVSG does not include the dependency types for transaction orders.
Of course, Bernstein's definition has no problem.
However, we introduce the above notations to clarify our proof of correctness theorems.

\subsection{Data Ingestion Queries and Write Omission Technique}

The data ingestion queries are used to aggregate updates from sensors and mobile devices in IoT/Telecom applications.
This query has long been discussed in non-transactional systems such as streaming databases~\cite{DBLP:conf/vldb/TuLPY06,DBLP:journals/tmc/TahirYM18}.
The data ingestion query does not need to be transactional if it ingest data for historical analysis.
However, if the data is used in real-time operations that manipulate real-world actuators, we need to use CC protocols because such operations require the transactional correctness~\cite{meehan2017data,10.1145/2882903.2899406}.
Unfortunately, for write contended workloads such as data ingestion queries, it has been studied that existing state-of-the-art CC protocols do not exhibit high throughput~\cite{Yu2014StaringCores,Wu2017AnControl} since they need to use locking to satisfy serializability.

To solve the performance degradation problem on data ingestion queries, non-transactional streaming databases uses the write omission technique such as load shedding~\cite{DBLP:conf/vldb/TuLPY06}.
In transaction processing, such write omission technique is as known as the Thomas write rule (TWR)~\cite{Thomas1977ABases}, which is an optimization rule for the timestamp ordering (T/O) CC protocol.
With the TWR, a transaction can avoid installing a write of a data item $x$ when the transaction's timestamp is less than the $x$'s timestamp, which has already been installed.
However, it is unclear whether or not an omission satisfies the transactional correctness, and it is also unclear whether or not this rule can apply to other modern protocols.

\section{Safely Omittable Transactions}
\label{sec:omittable_theo_and_ex}

In this section, we introduce the definition of \textbf{safe omittable transactions} for any protocol to utilize the technique of write omission, and its validation algorithm with MVSG.

We first provide the following definition:

\begin{definition}[Unpublished]
    \label{def:unpublished}
    An \textbf{unpublished} transactions is a transaction which does not execute installing and logging of its write set into storage.
\end{definition}

\begin{definition}[Safely Omittable]
    \label{def:omittable}
    An unpublished transaction $t_j$ is \textbf{safely omittable} if $c_j$ does not affect the correctness.
\end{definition}

The key aspects of the Definition~\ref{def:omittable} is that the transaction can commit without publishing its write set.
It's versions must be unread by concurrent transactions and also future transactions.
Therefore, $t_j$ is unnecessary for other transactions; we can skip both buffer updates and persistent logging for safely omittable transactions.

To test whether a transaction $t_j$ is safely omittable, we need to verify the correctness.
It achieved by the notion of MVSG.
Specifically, we have to generate a version order for the schedule, and then test 1) the MVSG's acyclicity and 2) the wall-clock ordering among non-concurrent transactions.
Figure~\ref{fig:wrw_optim} illustrates these testing with two example schedules and four version orders.
Safely omittable transactions are grayed out and unpublished updates are marked with a trash box.
The pairs of (a)-(b) and (c)-(d) have the same schedule, but have different version orders and transaction lifetimes, respectively.
These difference result in (b) and (d) include safely omittable transactions, while (a) violates serializability and (c) violate linearizability.
Note that the operations arrives in order of wall-clock time depicted as left-to-right, but we draws MVSGs with version orders that are generated regardless of the arrival order.

\textbf{Serializability.} In (a) and (b), $t_1$ executes read-modify-write into $x$ (reads $x_0$ and writes $x_1$ as the next version) and $t_2$ executes blind write (writes $x_2$ as any version) over the same data item.
If we generate a version order $x_2 <_v x_1$ and omit $t_2$ as seen in (a), the edges of MVSG represent a cycle $t_1 \to t_2 \to t_1$.
This is because $t_1$ read-modify-writes to the just next version of $x_0$ and thus, any transaction can place a version as the middle of $x_0$ and $x_1$.
However, if we generate a version order $x_1 <_v x_2$ as seen in (b), $t_1$ is safely omittable since the MVSG has the acyclic form.

\textbf{Linearizability.} In (c) and (d), there exists only blind updates.
Therefore, both MVSGs are edgeless and acyclic.
However, if we generate improper version order and omit wrong versions, database violates linearizability.
In the case (c) we generate the version order $x_2 <_v x_1$.
This version order does not match the wall-clock ordering of transactions; $t_1$ and $t_2$ are non-concurrent transactions and thus the order of these transactions' versions must be $x_1 <_v x_2$.
Hereafter, the transactional correctness of database is lost by write omission.

We can see the rules and limitations for creating safely omittable transactions from the above examples.
Examples (a) and (b) indicate that there must be at least one \textit{blind update}~\cite{Weikum2001TransactionalRecovery}.
If there is no blind update but we omit an update, the correctness testing will not pass regardless of what version order we create.
In addition, examples (c) and (d) indicate that the blind update must be written by a concurrent transaction.
Hence, to create safely omittable versions, we have three questions that have never been comprehensively studied to the best of our knowledge:

\begin{quote}
\begin{enumerate}[Q1:]
    \item How to find concurrent blind updates to generate a version order?
    \item How to test the correctness?
\end{enumerate}
\end{quote}

In this paper, we package the solutions to these three problems into a single CC protocol extension called SSE.

\section{SSE: Scheduling space expander}
\label{sec:sse}

In this section, we propose the \textbf{scheduling space expander (SSE)} which adds another control flow to conventional protocols for generating safely omittable transactions.
SSE solve the problems shown in the previous section as followings:

\begin{quote}
\begin{enumerate}[\bfseries \textrm{A}1:]
    \item SSE selects and manages concurrent blind updates as the \textit{pivot versions} to generate an \textit{erasing version order} which assumes there exists safely omittable transactions~(Section~\ref{sec:generate_additional_version_order}).

    \item SSE uses MVSG to test the correctness of erasing version order efficiently~(Section~\ref{sec:validation}).
\end{enumerate}
\end{quote}

We first introduce an \textit{erasing version order}, which is a version order generated by SSE to reduce the computational cost of the correctness testing (Section~\ref{sec:generate_additional_version_order}).
We next outline how tests an erasing version order (Section~\ref{sec:validation}).
We then show the SSE's control flow to expand a protocol and improve its performance on data ingestion queries (Section~\ref{sec:control_flow}).

\subsection{SSE's Version Order Generation}
\label{sec:generate_additional_version_order}

\begin{figure}[t]
    \includegraphics[width=0.45\textwidth]{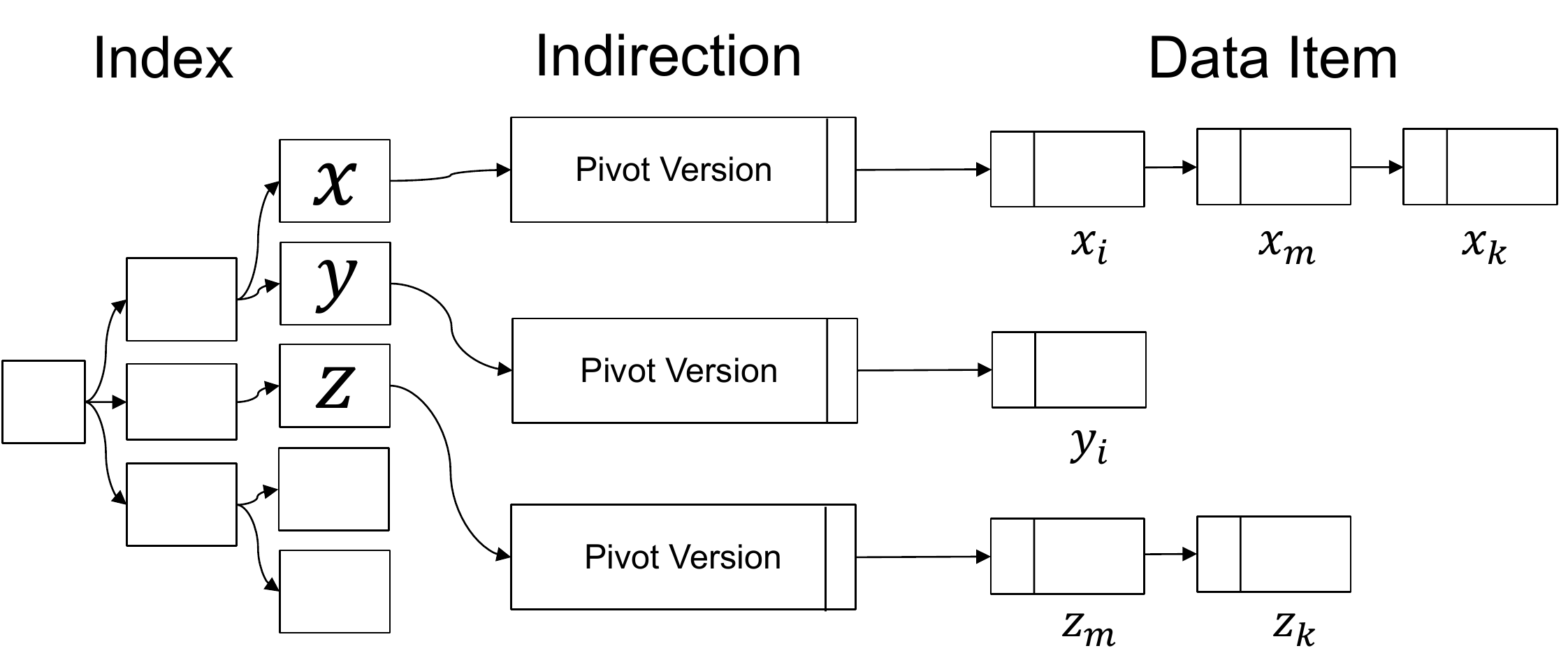}
    \caption{Overall structure of SSE implementation.}
    \label{fig:whatisfv}
\end{figure}

The serializability theory indicates a write is safely omittable if there exists a version order which draws acyclic MVSG.
However, it is impractical to try the testing with all possible version orders since the testing with all  version orders is proven as NP-complete~\cite{Weikum2001TransactionalRecovery,Bernstein1983MultiversionAlgorithms,Papadimitriou1986TheControl}.
To perform the test efficiently, SSE incorporates heuristic restrictions in generating candidate version orders.
When an active transaction $t_j$ arrives, SSE generates an \textbf{erasing version order}.
It is a version order which assumes that all writes in $t_j$ are safely omittable.
Formally, an erasing version order satisfies the following three conditions:
(1) SSE changes the version order only for data items that $t_j$ is updating.
From this condition, SSE's correctness testing can focus on the MVSG's subsets that include the node of $t_j$; if correctness is violated, it will be due to a change of a version order for a data item related to $ws_j$.
(2) Each version $x_j$ is the just before version of a blind update.
We add this condition to hold the unread condition of safely omittable versions; the non-latest versions become stale and not requested by subsequent transactions.
In addition, we enforce that the following version must be blind update.
As described at Figure~\ref{fig:wrw_optim}-(a) in Section~\ref{sec:omittable_theo_and_ex}, if $x_j$ is placed on the middle of read-modify-write, then MVSG always become cyclic.

As a concrete way to create an erasing version order, SSE selects a blind update as \textit{pivot version} for each data item.
A pivot version is a landmark for generating the erasing version order; it tells other transactions to ``place your version just before this pivot version''.
For example, in SSE, a transaction $t_j$ generates an erasing version order such that all its versions are located immediately before the pivot versions.
If there exists pivot versions $x_{pv}, y_{pv}, ...$, then $t_j$ creates an erasing version order $x_j <_v x_{pv}, y_j <_v y_{pv}$.
Figure~\ref{fig:whatisfv} shows the overall structure of our prototype implementation of the pivot versions.
We implemented the pivot versions by adding a single indirect reference for each data item.
We assume that database has a tree-like index, and that every data item is accessed from its leaf nodes.
In SSE, every index leaf node has a pointer to a pivot version, which is the indirection object to data item.
Each data item is represented as a singly linked list starting from the pivot version.
SSE completes the correctness testing of erasing version order only with pivot versions; a pivot version includes footprints of reachable transactions, as described in the later Section~\ref{sec:validation}.

\subsection{Correctness Testing}
\label{sec:validation}

With an erasing version order, SSE tests the transactional correctness efficiently.
To test serializability, it is sufficient to test only the MVSG paths that include a node of $t_j$.
To ensure linearizability, all of $t_j$'s reachable nodes that appear in the serializability test must be concurrent with $t_j$.
We define two types of node sets, \textit{successors}, and \textit{overwriters} (abbreviate as $s_j$ and $o_j$), in accordance with the outgoing edge from $t_j$ as follows:

\begin{definition}
    [Type of Reachable Transactions] For the transactions directly reachable from a transaction $t_j$, we define the following two sets:
    $$s_j := \{t_k| t_j \VOWW{} t_k\},\ \  o_j := \{t_k| t_j \VORW{} t_k\}$$
\end{definition}

The following theorem is derived from this definition:

\begin{theorem}
    [Directly Reachable Transactions]
    \label{theo:what_are_RN}
    If a schedule satisfies recoverability and an edge $t_j \to t_k$ exists, then the directly reachable transaction $t_k$ is in either $o_j$ or $s_j$.
\end{theorem}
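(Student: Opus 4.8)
The plan is to prove the statement by an exhaustive case analysis on the \emph{type} of the outgoing edge $t_j \to t_k$, exploiting the fact that the MVSG construction admits exactly three edge types. First I would recall that every MVSG edge is produced by a triple $w_j(x_j), r_i(x_j), w_k(x_k)$ and is therefore one of $\WR$, $\VORW$, or $\VOWW$. By the definition of the two sets, $o_j$ collects exactly the heads of the $\VORW$ edges leaving $t_j$ and $s_j$ collects exactly the heads of the $\VOWW$ edges leaving $t_j$. Hence if $t_j \to t_k$ is a $\VORW$ edge then $t_k \in o_j$, and if it is a $\VOWW$ edge then $t_k \in s_j$; the only way the claim could fail is if $t_j \to t_k$ were a write-read edge $t_j \WR t_k$. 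The entire proof therefore reduces to ruling out an outgoing $\WR$ edge from $t_j$.

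Next I would unfold the meaning of such an edge: $t_j \WR t_k$ can exist only if $t_k$ issues a read $r_k(x_j)$ of some version $x_j \in ws_j$ that $t_j$ wrote. The crux is to show that no such read is possible. Since $t_j$ is the transaction under test and is \textbf{unpublished} (Definition~\ref{def:unpublished}), it never installs or logs its write set, so no version $x_j \in ws_j$ is ever made visible in storage after $t_j$ commits; recoverability closes the remaining window, because while $t_j$ is still active no transaction is permitted to read a value it has written. Combining the two, a version $x_j$ is unreadable both before $c_j$ (by recoverability) and after $c_j$ (by unpublishedness), so the operation $r_k(x_j)$ cannot occur and the generating triple for $t_j \WR t_k$ does not exist. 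This is consistent with the erasing version order placing every $x_j$ immediately before a blind update, which already renders $x_j$ a stale, unrequested version.

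Finally I would assemble the conclusion: with the $\WR$ case eliminated, every outgoing edge $t_j \to t_k$ is either $\VORW$ or $\VOWW$, whence $t_k \in o_j \cup s_j$, as claimed.

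The step I expect to be the main obstacle is precisely the elimination of the outgoing $\WR$ edge, because it is the only place where the hypotheses do real work---indeed, for an ordinary published transaction an outgoing write-read edge is perfectly recoverable, so unpublishedness is essential and cannot be dropped. I must carefully separate the abstract schedule---in which $x_j \in ws_j$ formally exists---from visibility, and argue that the combination of unpublishedness and recoverability \emph{forbids} $r_k(x_j)$ rather than merely postponing it. A secondary point worth checking explicitly is that the three edge types are genuinely exhaustive for edges leaving a single node: I would verify that, in whichever role $t_j$ occupies inside the generating triple (writer, reader, or overwriter), no fourth kind of outgoing edge can arise, which follows directly from the MVSG definition.
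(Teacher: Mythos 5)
Your proposal is correct and follows essentially the same route as the paper: enumerate the three MVSG edge types, note that $\VORW{}$ and $\VOWW{}$ edges land by definition in $o_j$ and $s_j$ respectively, and rule out an outgoing $t_j \WR{} t_k$ edge because $t_j$ is the active (uncommitted) transaction under test and recoverability forbids any transaction from reading versions written by uncommitted transactions. Your extra appeal to unpublishedness to cover the post-commit window is harmless additional detail, but the paper's proof needs only the active-transaction argument.
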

\begin{proof}
    An MVSG has only three types of edges: $t_j \WR{} t_k, t_j \VORW{} t_k$, and $t_j \VOWW{} t_k$. Because recoverability is satisfied and $t_j$ is active, edge $t_j \WR{} t_k$ does not exist.
\end{proof}

From Theorem~\ref{theo:what_are_RN}, SSE can test correctness by testing all paths starting from these two sets of transactions. Therefore, SSE separates the procedure of correctness testing into two sub-testings as described below.

\textbf{Testing of successors.} No studies have focused on the testing of successors because conventional protocols always write the incoming $t_j$'s versions as the latest versions, so the set of successors is empty.
Algorithm~\ref{alg:successors_validation_pseudo} provides a testing procedure for $s_j$.
Let $rs_i$ be a set of versions read by $t_i$, and let $ws_i$ be a set of versions wrtten by $t_i$.
In step (A), it collects transactions that are included in or reachable from $s_j$. In step (B), for linearizability, it tests the concurrency between $t_j$ and each transaction $t_m$ in the collected transactions. In steps (C) and (D), the algorithm tests serializability. Because each $t_m$ is a transaction included in or reachable from $s_j$, a path $t_j \VOWW{} ... , \to t_m$ already exists. Therefore, if there is no path $t_m \to t_j$ for each $t_m$, the MVSG is acyclic; steps (C) and (D) thus focus on the last such edge $t_m \to t_j$. Note that transaction $t_m$ does not have a path $t_m \VOWW{} t_j$. This type of edge is added to the MVSG only if some committed transactions read some version in $ws_j$, and such a read operation is not permitted to enforce recoverability; because $t_j$ is an active transaction, no committed transaction can read version $x_j$.
Therefore, we only need to check the types of the last edges $t_m \WR{} t_j$ and $t_m \VORW{} t_j$. Accordingly, in step (C), the algorithm checks the last edges $t_m \WR{} t_j$. It checks whether $y_n$ in $rs_j$ is the same with $y_m$ in $ws_m$. This is because, if this condition holds, then there exists a cyclic path $t_j \VOWW{} ... \to t_m \WR{} t_j$. Similarly, in step (D), the algorithm checks the last edges $t_m \VORW{} t_j$ by confirming that there exists $y_j$ in $ws_j$ that is newer than $y_g$ in $rs_m$. This is because there exists a cyclic path $t_j \VOWW{} ... \to t_m \VORW{} t_j$ if the condition holds. Consequently, if the testing of steps (C) and (D) are passed, then no transaction in $s_j$ can reach $t_j$.

\begin{algorithm}[t]
    \small
    \KwIn{$t_j$}
    \KwOut{whether or not $c_j$ keeps serializability}
    
        $T := \{t_k, t_i | t_k \in s_j \land t_i$ is reachable from $t_k$  in MVSG $\}$
        \tcp*[r]{(A)}

        \ForAll{$t_m$ in $T$}{

            \If
                (\tcp*[f]{(B)})
                {$t_m$ commits before $t_j$'s beginning}{
                    \Return{strict serializability is not satisfied}
            }
            \ForAll{$y_m$ in $ws_m$}{
                \ForAll{$y_n$ in $rs_j$}{
                    \If
                        (\tcp*[f]{(C) $\WR{}$ to $t_j$})
                        {$y_m = y_n$}{
                            \Return{MVSG is cyclic}
                        }
                }
            }
            \ForAll{$y_g$ in $rs_m$}{
                \ForAll{$y_j$ in $ws_j$}{
                    \If
                        (\tcp*[f]{(D) $\VORW{}$ to $t_j$})
                        {$y_g <_v y_j$}{
                            \Return{MVSG is cyclic
                        }
                    }
                }
            }
        }

        \Return{MVSG is acyclic}
    \caption{Correctness testing for $s_j$}
    \label{alg:successors_validation_pseudo}
\end{algorithm}

\textbf{Testing of overwriters.} The detailed algorithms of this subsets are beyond the scope of this paper since we can use existing algorithms, such as anti-dependency validation, from conventional protocols.
For example, Silo~\cite{Tu2013SpeedyDatabases} has one of the simplest approaches: it checks $o_j = \phi$ by testing whether each version in $rs_j$ has been overwritten and a newer version exists.
If both testings of successors and overwriters are succeeded, we can commit $t_j$ without any correctness violation since $t_j$'s all write operations are safely omittable.

An important fact is that the set of transactions $t_{pv}$ that wrote the pivot versions, is equivalent to the successors, $s_j$.
This is because the pivot versions are the just next versions for $t_j$’s updates.
Therefore, we can implement the Algorithm~\ref{alg:successors_validation_pseudo} by using the read/write sets of the transactions $t_m$, that includes $t_{pv}$ and reachable transactions from some $t_{pv}$.
To this end, each pivot version hold the footprints of transactions that have read or written a version greater than or equal to this pivot version.

\subsection{Control Flow}
\label{sec:control_flow}

\begin{figure}[t]
  \includegraphics[width=0.45\textwidth]{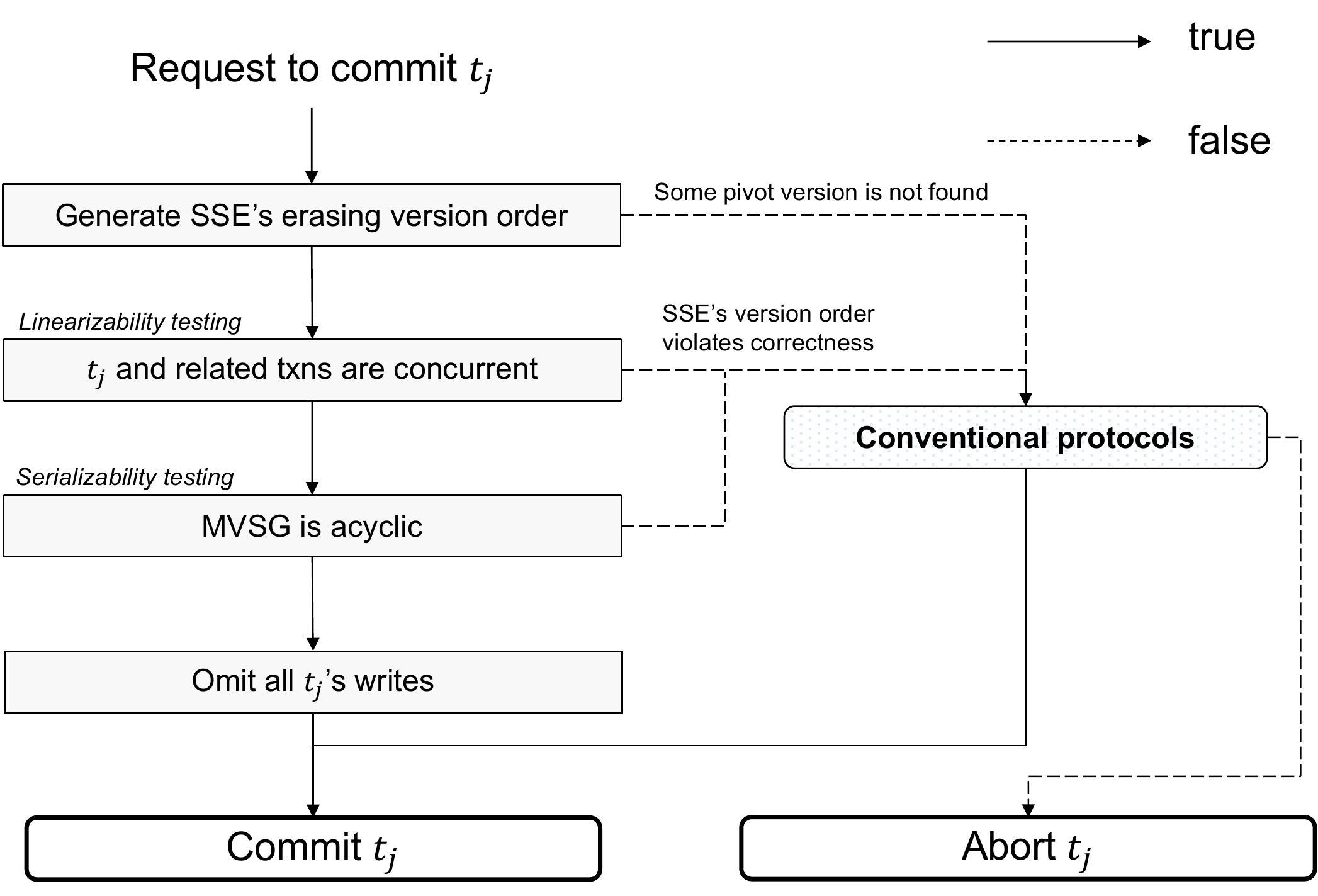}
  \caption{Control flow of SSE with conventional protocols.}
  \label{fig:flowchart}
  \vspace{-10pt}
\end{figure}

We revealed that an erasing version order helps to confirm that a write is safely omittable, and we also revealed the algorithm to test it.
The next challenge is how to apply this version order and testing algorithm to conventional CC protocols.
It is desirable for us not to change the performance characteristics and the specifications of conventional CC protocols drastically.
However, if we simply change a conventional protocol to generate an erasing version order, the changed protocol will be useless; since erasing version order always try to write non-latest versions, we would not successfully update the latest versions.
Even if contended updates occur into the same data item, practical applications need to update the latest version periodically.
Rather than such changes, we propose \textbf{SSE}, an extension method that purely extends the scheduling space of conventional protocols.
SSE is not a protocol, but an extension of the protocol; instead of changing the version order and correctness testing algorithms of conventional protocols, SSE adds erasing version order and tests its correctness independently.

Figure~\ref{fig:flowchart} shows the control flow of an extended protocol.
SSE starts its processing upon the commit request of an active transaction $t_j$.
In the first step, SSE generates an erasing version order as described in Section~\ref{sec:generate_additional_version_order}.
If any version order is not be found, SSE delegates the processing of $t_j$ to a conventional protocol.
If an erasing version order is found, SSE then checks correctness.
For linearizability, it checks concurrency between $t_j$ and all related transactions.
For serializability, it checks whether MVSG has no cycle.
If both tests are passed, SSE omits all of $t_j$'s versions and commits $t_j$.
Otherwise, it delegates the processing of $t_j$ to a conventional protocol.
Note that SSE does not change the version order generated by the conventional protocol, and SSE can commit $t_j$ but does not directly abort $t_j$.
Even if SSE cannot commit $t_j$ with its erasing version order, it may be possible to commit $t_j$ with a version order from the conventional protocol.
It indicates that SSE purely expands the scheduling space of conventional protocols.

In the best case, we can omit updates of $t_j$ only by accessing the pivot versions since these indirection data structures have all the necessary data for correctness testing.
Otherwise, if an update $w_j(x_j)$ does not find a pivot version of $x$ or some testing fails, then we delegate the control and quit to execute testing of SSE.
In this case, pivot versions have almost negligible overhead for conventional protocols; it adds only a single indirection references for each data item.

\section{ESSE: Epoch-Based SSE}
\label{sec:esse}

In this section we introduce the optimization method of SSE, called  \textbf{epoch-based SSE (ESSE)}.
We first explain that the na\"ive implementation of SSE has performance problems in the correctness testing of successors.
As described in Section~\ref{sec:omittable_theo_and_ex}, this testing  requires both a read set ($rs_m$) and a write set ($ws_m$) for all reachable transactions $t_m$; it leads a huge overhead. Specifically, in step (A) of Algorithm~\ref{alg:successors_validation_pseudo}, the memory consumption from the footprints $rs_m$ and $ws_m$ increases as the number of testing target transactions $t_m$ increases. In addition, steps (C) and (D) have the synchronization problem: footprints $rs_m$ and $ws_m$ must be a concurrent data structure to commit transactions in parallel.

To solve these two performance problems, we propose ESSE, which is an optimized SSE implementation using epoch-based group commits~\cite{Tu2013SpeedyDatabases,Chandramouli2018FASTER:Updates}. An epoch-based group commit divides the wall-clock time into \textit{epochs} and assigns each transaction to an epoch. The commit operations for the transactions in an epoch are simultaneously delayed. Because they have the same commit point, they can be regarded as concurrent, whereas transactions in different epochs are not concurrent.
ESSE heavily utilizes this nature of concurrency in epochs.
Specifically, ESSE selects pivot versions as the first blind updates of data items for each epoch.
Then the linearizability testing become simple; we can test it easily by checking that the pivot versions are written in the same epoch with $t_j$'s one.
This tight coupling of epochs and pivot versions helps to mitigate the former performance problem of SSE, the huge memory consumption.
There is no need to retain the footprints of the transactions in old epochs because SSE fails the linearizability testing before the serializability testing for these transactions.

ESSE also includes two optimizations, as follows. (1) It creates an optimization rule by using epochs to reduce the number of testing target transactions $t_m$ (Section~\ref{sec:optimization}). (2) It provides a \textbf{reachability flag} for testing of successors efficiently; it is a latch-free implementation of SSE's pivot versions to solve the latter performance problem of SSE, the necessity of synchronization (Section~\ref{sec:implementation}).
After these explanation of ESSE, we show the actual application of ESSE to conventional protocols, and explain what kind of protocols are preferable for ESSE (Section~\ref{sec:protocol_comparison}).

\subsection{Optimization Rule}
\label{sec:optimization}

\begin{figure}[t]
    \includegraphics[width=0.5\textwidth]{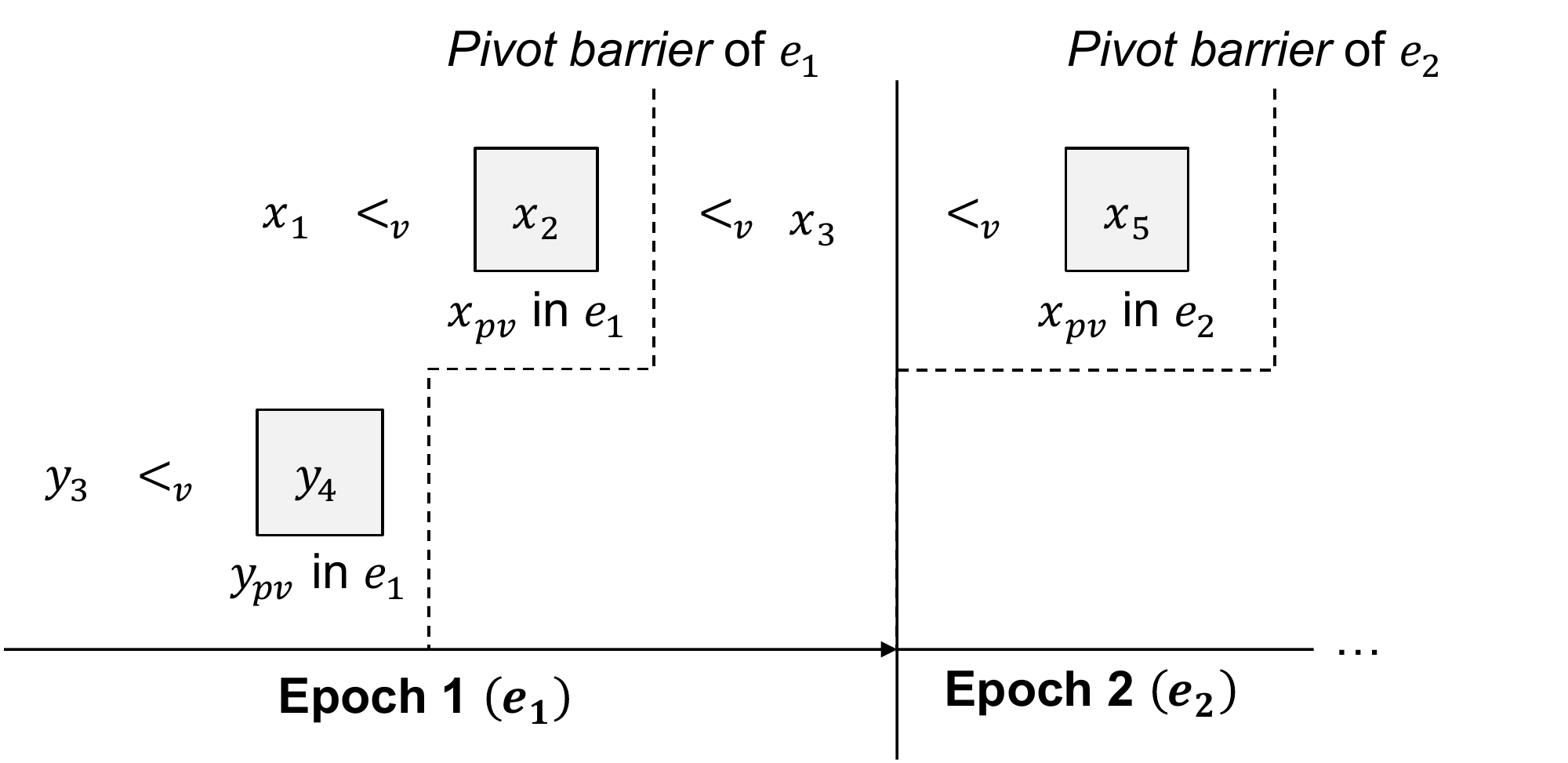}
    \caption{Pivot versions and pivot barriers of two epochs. All versions are not omittable and installed in memory. An epoch's pivot barrier consists of its pivot versions. The gray boxes depict the pivot versions, and the dashed lines depict the pivot barriers of each epoch. $t_3$ fails SSE's correctness testing and installs versions for both the right and left side of the pivot barrier. Such a transaction causes an MVSG cycle for subsequent transactions.}

    \label{fig:epoch_barrier}
\end{figure}


To reduce the number of testing target transactions, we add the optimization rules by using ESSE's pivot versions. For an epoch, we define a set of pivot versions as a \textbf{pivot barrier}. Using the pivot barrier, we obtain the following theorem, which provides important properties for reducing the target transactions.

\begin{theorem}
    [Pivot Barrier Violation]
    \label{theo:epoch-barrier}
    Let $t_j$ be an active transaction in epoch $e_1$, attempting to commit on ESSE.
    Let $et_1$ be the transactions that wrote the pivot barrier (pivot versions in the epoch) of $e_1$.
    The correctness testing of $successors_j$ passes if $t_j$ is not reachable from any transaction in $et_1$.
\end{theorem}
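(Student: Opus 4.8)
The plan is to prove the contrapositive of the serializability part of Algorithm~\ref{alg:successors_validation_pseudo}: the successor test reports a cycle only when $t_j$ is reachable from some transaction in $et_1$. Recall that the test first builds the set $T$ of step~(A), namely the successors $s_j$ together with every transaction reachable from $s_j$, and then scans each $t_m \in T$ for a back-edge $t_m \to t_j$. By Theorem~\ref{theo:what_are_RN} and recoverability, the only back-edges that can occur are $t_m \WR{} t_j$ and $t_m \VORW{} t_j$ (a $\VOWW{}$ edge into the active $t_j$ would require a committed read of a version in $ws_j$, which recoverability forbids), and these are exactly the edges detected in steps~(C) and~(D). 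Hence ``the successor test passes'' is equivalent to ``no $t_m \in T$ carries an edge into $t_j$,'' and it suffices to exhibit such an edge whenever the test fails.

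First I would establish the key containment $T \subseteq R(et_1)$, where $R(et_1)$ denotes the set of transactions reachable from (or lying in) $et_1$. Because $t_j$ lives in epoch $e_1$, its erasing version order places each written version $x_j$ immediately before the pivot version $x_{pv}$ of $e_1$, so every pivot writer of an item in $ws_j$ belongs to $et_1$. Now take any $t_k \in s_j$: by definition $t_j \VOWW{} t_k$, so $x_j <_v x_k$ for some item, and since $x_{pv}$ is the immediate successor of $x_j$ we get $x_{pv} \le_v x_k$. If $x_k = x_{pv}$, then $t_k$ is the pivot writer and $t_k \in et_1$; if $x_k >_v x_{pv}$, then, reusing the reader $r_i(x_k)$ that witnesses $t_k \in s_j$, the triple $(w_k(x_k), r_i(x_k), w_{pv}(x_{pv}))$ yields the edge $t_{pv} \VOWW{} t_k$, so $t_k$ is reachable from $t_{pv} \in et_1$. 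Thus $s_j \subseteq R(et_1)$, and by transitivity of reachability the closure added in step~(A) keeps us inside $R(et_1)$, giving $T \subseteq R(et_1)$.

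With the containment in hand, the main argument is short. Suppose the successor test fails; then some $t_m \in T$ has an edge $t_m \to t_j$. By the containment $t_m \in R(et_1)$, so there is a path from some $t \in et_1$ to $t_m$; appending the edge $t_m \to t_j$ produces a path $t \to \dots \to t_m \to t_j$, i.e., $t_j$ is reachable from $et_1$. This contradicts the hypothesis, so the test must pass, which is the claim.

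The step I expect to be the main obstacle is the containment $T \subseteq R(et_1)$, specifically the case $x_k >_v x_{pv}$: I must argue carefully that a later writer of an item $t_j$ touches is genuinely reachable from the $e_1$ pivot writer, which hinges on reusing the very reader that places $t_k$ in $s_j$ to instantiate a valid MVSG triple, and on the fact that the erasing version order inserts $x_j$ directly in front of the epoch-$e_1$ pivot, so that no version sits strictly between $x_j$ and $x_{pv}$. A secondary point to nail down is that every pivot writer relevant to $ws_j$ indeed lies in the pivot barrier of $e_1$ rather than a neighbouring epoch; this follows from ESSE selecting pivots per epoch and from $t_j$ being an $e_1$ transaction, but it should be stated explicitly, since the whole reduction rests on it.
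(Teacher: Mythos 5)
Your proof is correct, and it establishes exactly what the paper's proof establishes; note that both you and the paper tacitly confine ``passes'' to the cycle-detection part of Algorithm~\ref{alg:successors_validation_pseudo} (neither argument addresses the linearizability check in step~(B)). The organization, however, differs. The paper argues by contradiction: it takes a transaction $t_i$ that closes a detected cycle with an edge $t_i \to t_j$, notes that by hypothesis $t_i$ is not reachable from any pivot writer, and then does a case analysis on the first edge leaving $t_j$ in that cycle --- it cannot be $\VOWW{}$ (since every version in $ws_j$ immediately precedes a pivot version, such an edge lands on or beyond the barrier, from which $t_i$ is unreachable), and it cannot be $\WR{}$ (because $t_j$ is active and unpublished) --- so it must be $\VORW{}$, which by Theorem~\ref{theo:what_are_RN} places $t_i$ outside the set $T$ that the test examines, so the test could not have fired. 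You instead prove the contrapositive through the containment $T \subseteq R(et_1)$, where $R(et_1)$ denotes the transactions in or reachable from $et_1$, and then observe that any back-edge from some $t_m \in T$ into $t_j$ immediately witnesses reachability of $t_j$ from $et_1$. The substance is the same --- both arguments turn on the immediate-predecessor property of the erasing version order plus recoverability --- but your decomposition buys two things: it avoids reasoning about the shape of a hypothetical cycle, and it makes explicit a step the paper only gestures at, namely why a writer $t_k$ of a version strictly beyond the pivot is reachable from the pivot writer; your triple $(w_k(x_k), r_i(x_k), w_{pv}(x_{pv}))$ yielding $t_{pv} \VOWW{} t_k$ is precisely that justification. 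One small detail to add there: the MVSG triple requires its three transactions to be distinct, and $t_i \ne t_{pv}$ does hold --- because a pivot version is a blind update, $t_{pv}$ reads no version of that data item --- but this deserves a sentence, since it is the only place the blind-update nature of pivots enters your containment argument.
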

\begin{proof}
    To prove by contradiction, we first assume that no transaction exists that satisfies both conditions of Theorem~\ref{theo:epoch-barrier} and that the testing of $successors_j$ detects an MVSG cycle. Because the MVSG has a cycle, there exists a transaction $t_i$ from which $t_j$ is directly reachable, which satisfies the direct reachability condition in Theorem~\ref{theo:epoch-barrier}. By the assumption, $t_i$ does not satisfy the right side of the pivot barrier condition; thus, $t_i$ is not reachable from any $t_{pv}$. Therefore, the MVSG cycle is completed by the left side of the pivot barrier in the form $t_j \to ..., \to t_i \to t_j \to t_{pv}$. We focus on the first edge outgoing from $t_j$. Because all versions in $ws_j$ are immediate predecessors of the pivot versions, no transaction can write a version that is larger than a version belonging to $t_j$ but smaller than the pivot version. Therefore, to form the cycle on the left side of the pivot barrier, the type of edge in the cycle outgoing from $t_j$ must not be $\VOWW{}$.
    In addition, the type of edge must not be $\WR{}$ since $t_j$ is an active transaction and it is attempting to omit its writes.
    Therefore, the edge type must be $\VORW{}$.
    Then, from Theorem~\ref{theo:what_are_RN} and the assumption, $t_i$ is not in $successors_j$ and its reachable transactions.
    Therefore, $t_i$ is not a target transaction of the correctness testing of $successors_j$, and $t_j$ cannot detect an MVSG cycle.
\end{proof}

Theorem~\ref{theo:epoch-barrier} clarifies the type of transactions introducing MVSG cycles during the testing of successors. Because an MVSG cycle always includes the transactions that read or write a version on the right side of the pivot barrier and that $t_j$ is reachable from, ESSE only needs to store the footprints of such transactions.
If $t_j$ detects the existence of such transactions in the saved footprints, ESSE does not commit $t_j$; on the contrary, the existence of other transactions does not matter for $t_j$.
Therefore, we only need to manage the footprints of transactions that satisfy the two conditions in Theorem~\ref{theo:epoch-barrier}.
Specifically, we store the footprints of transactions that satisfy the first condition (right side of pivot barrier) since we cannot test whether a transaction satisfies the second condition without incoming transaction $t_j$.
Transactions that satisfy the first condition assumes that they are reachable to incoming $t_j$ and thus they store their read/write sets into pivot version.

Figure~\ref{fig:epoch_barrier} shows examples of pivot versions and their version order in each epoch. All versions are installed in shared memory, and all transactions $t_1, t_2, t_3, t_4$, and $t_5$ are committed. Let $t_j$ be an active transaction that belongs to epoch $e_1$ and requests to write $x_j$ and $y_j$. For $t_j$'s correctness testing of successors, the existence of $t_3$ is essential: because $t_3$ satisfies both the right-side and directly reachable conditions in Theorem~\ref{theo:epoch-barrier}, it generates a cycle in the MVSG and violates serializability in committing $t_j$ with ESSE's erasing version order. Therefore, before committing $t_j$, ESSE has to detect such an MVSG cycle from the footprint of $t_3$.
In contrast, because $t_1$ does not satisfy the right-side condition, by Theorem \ref{theo:epoch-barrier}, it is unnecessary to save its footprint.
In addition, by using epochs, ESSE reduces the number of necessary footprints and enables periodic garbage collection.
There is no need to manage the pivot versions and pivot barriers of old epochs.
Specifically, by linearizability, it is unnecessary to manage the footprints of any transactions except $t_5$ after all of $e_1$'s transactions are terminated; after $e_1$ finishes, we do not start any transactions belonging to $e_1$, and it is thus unnecessary to manage the pivot versions and footprints of $e_1$.

\subsection{Latch-free Implementation}
\label{sec:implementation}

\begin{figure}[t]
    \centering
    \includegraphics[width=0.40\textwidth]{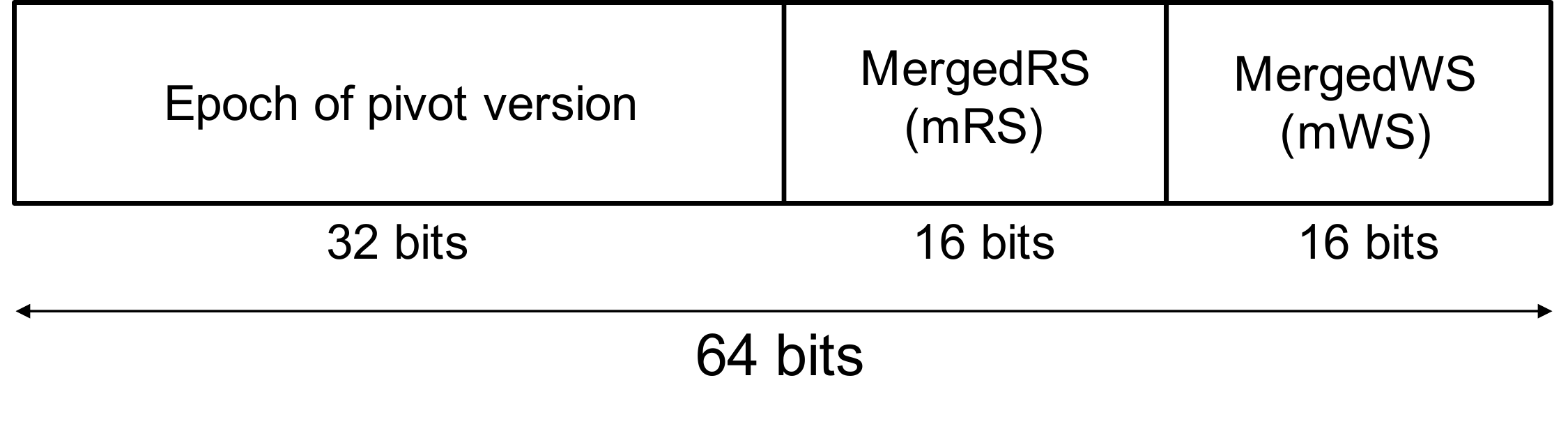}
    \caption{Layout of a reachability tracker}
    \label{fig:datastructure}
\end{figure}

\begin{figure}[t]
    \centering
    \includegraphics[width=0.37\textwidth]{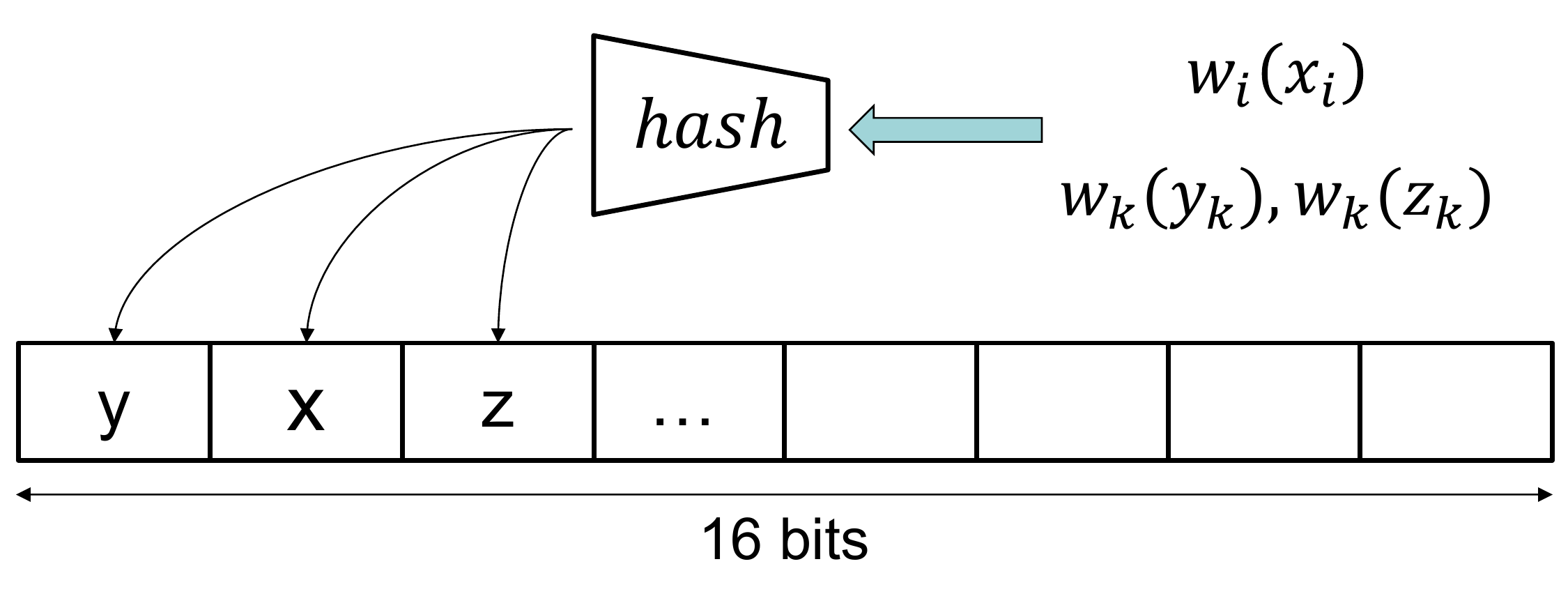}
    \caption{An example of mergedWS}
    \label{fig:merged_sets}
\end{figure}

\begin{algorithm}[t]
    \small
        \KwIn{$t_j$}
        \KwOut{whether or not $c_j$ keeps serializability}

        \ForAll{$x_j$ in $ws_j$}{
            {RT} := get\_reachability\_tracker\_of($x$)
            \tcp*[r]{(A)}

            \If{{RT}.epoch $\ne$ $t_j$'s epoch}{
                \Return{strict serializability is not satisfied}
                \tcp*[r]{(B)}
            }

            \If
                (\tcp*[f]{(C)})
                {any key in $rs_j$ exists in {RT}.mWS}{
                    \Return{MVSG may not be acyclic}
            }

            \If
                (\tcp*[f]{(D)})
                {any key in $ws_j$ exists in {RT}.mRS}{
                    \Return{MVSG may not be acyclic}
                }
        }
        \Return{strict serializability is satisfied}

    \caption{Correctness testing for $successors_j$ by using reachability trackers}
    \label{alg:successors_validation_detail}
\end{algorithm}

In ESSE, a pivot version for each data item is represented as a \textbf{reachability tracker}, which is a 64-bit data structure.
It is the detailed implementation of the pivot version as described in Figure~\ref{fig:whatisfv}.
A reachability tracker consists of an epoch and two fields (MergedRS and MergedWS). Figure~\ref{fig:datastructure} shows the reachability tracker's layout.

\begin{itemize}
    \item \textbf{Epoch.} This field stores the epoch of the pivot version as a 32-bit integer. 
    Each transaction fetches and assigns its own epoch from the global epoch at the beginning of the transaction. When a transaction executes a blind update and it becomes a pivot version, ESSE stores the transaction's epoch to this field.
    This field is used for the correctness testing.
    Linearizability is satisfied if the epochs of all collected reachability trackers have the same epoch as the active transaction's epoch.
    \item \textbf{MergedRS (mRS) and MergedWS (mWS).} These fields represent 16-bit bloom filters. Each filter stores all footprints of the transactions that satisfy the two conditions in Theorem~\ref{theo:epoch-barrier}; it stores the union of the keys of the read or write set for all transactions that read or write the greater or equal versions of pivot versions.
    Figure~\ref{fig:merged_sets} illustrates an example layout of the mWS.
    Each data item $x,y,z$ is mapped into the corresponding slot by a hash function $h$. ESSE uses this structure to test the serializability of successors.
\end{itemize}

The reachability tracker encapsulates the functionalities of the SSE's pivot version into the 64-bit data structure; it has a scheme for transactions' concurrency detector and footprints for reachable transactions.
Through this compression, ESSE accesses the reachability trackers in a latch-free manner by using atomic operations such as a compare-and-swap (CAS).

Algorithm~\ref{alg:successors_validation_detail} is a detailed implementation of Algorithm~\ref{alg:successors_validation_pseudo} that is based on using reachability trackers.
All steps (A-D) are equivalent in the two algorithms.
In step (A), it collects the reachability trackers for the data items in the $ws_j$ in order to generate erasing version order. In step (B), the algorithm checks linearizability by testing whether $t_j$ and the reachability trackers have the same epoch.
If it detects different epochs, the correctness testing for linearizability fails.
In steps (C) and (D), the algorithm checks serializability by testing whether $t_j$ has an incoming edge.
The reachability tracker exploits the bloom filter and thus it may produce false positives though it produces no false negatives.
When a hash function assigns different keys to the same slot, it may return that a non-existent cycle exists.

\begin{algorithm}[t]
    \small
    \KwIn{$t_j$}
    \KwOut{commit or abort}

    $t_j$.omittable := true

    \textit{\# Begin atomic section}

    \If{no write of $t_j$ is a blind update}{
        $t_j$.omittable := false
    }
    \If{correctness testing fails}{
        $t_j$.omittable $:=$ false \tcp*{(1)}
    }
    \If{$t_j$.omittable == false}{
        \textit{\# Try the conventional protocol}

        \If{the conventional protocol aborts $t_j$}{
            \Return{abort}
        }
    }
    \If
        (\tcp*[f]{(2)})
        {$t_j$ satisfies the first conditions of Theorem \ref{theo:epoch-barrier}}
    {
        \ForAll{$x_i$ in $rs_j$}{
            {RT} := get\_reachability\_tracker\_of($x$) \\
            {RT}.mRS.merge($rs_j$.keys) \\
            {RT}.mWS.merge($ws_j$.keys) \\
            \tcp*[r]{(3)}

        }
        \ForAll{$x_j$ in $ws_j$}{
            {RT} := get\_reachability\_tracker\_of($x$) \\
            \If{{RT}.epoch $\ne$ $t_j$.epoch}{
                \If
                    (\tcp*[f]{(4)})
                    {$w_j(x_j)$ is blind update}{
                        {RT}.epoch := $t_j$.epoch \\
                        init {RT}.mRS with $rs_j$ \\
                        init {RT}.mWS with $ws_j$ \\
                    }
            }\Else{
                {RT}.mRS.set($rs_j$.keys) \\
                {RT}.mWS.set($ws_j$.keys) \\
                \tcp*{(5)}
            }
        }
    }
    \textit{\# End atomic section}

    \If{$t_j$.omittable}{
        clear($ws_j$) \\
    }
    \Return{commit}
    \caption{Commit protocol of ESSE}
    \label{alg:NWR_updating_ds}
\end{algorithm}

\begin{table*}[hbt!]
    \centering
    \begin{tabular}{ccccc}
        \toprule
        Protocol          & Epoch-based group commit & Read/Write set & Testing method of overwriters & Optimistic CC \\
        \midrule
        Silo OCC          & Yes                      & Yes            & Yes                           & Yes                            \\
        Cicada MVTO       & No                       & Yes            & No                            & Yes                            \\
        2PL & No                       & No             & No                            & No                             \\
        \midrule
        Required for ESSE & Yes & Yes & Yes & Preferable \\
        \bottomrule
    \end{tabular}
    \caption{A list of components required by ESSE, and the comparison of protocols.
    In order to extend a protocol with ESSE, we need to add lacking components.
    Although OCC is not a requirement, it is a preferable property for ESSE in terms of performance.}
    \label{tab:extension}
\end{table*}

\footnotetext[2]{Here, the ``commit protocol'' refers to the processing at the time when a user application does not add any operation into the transaction. It does not refer to the ``commit phase'' in optimistic concurrency control (OCC).}

To use reachability trackers for correctness testing, we need to update them to maintain the MergedRS/WS of reachable transactions from the pivot version.
Thus, we add this updating before the commit of each transaction. Algorithm~\ref{alg:NWR_updating_ds} describes the ESSE commit protocol\footnotemark[2] of an active transaction $t_j$. In step (1), it tests the strict serializability of the $o_j$ and $s_j$ sets.
If the testing fails, the algorithm quits processing $c_j$ by using ESSE's version order and delegates subsequent processing to conventional protocols.
In step (2), if $t_j$ does not satisfy the conditions of Theorem~\ref{theo:epoch-barrier}, ESSE does not store a footprint.
Otherwise, ESSE stores $t_j$'s footprint in the reachability trackers.
In step (3), it stores all edges from the pivot barrier to $t_j$ to the mergedRS of the data item.
Specifically, the algorithm sets a bit flag in mergedRS and mergedWS for each data item in $rs_j$ and $ws_j$, respectively.
It also sets bit flags for all bits in the reachability trackers of $rs_j$.
In step (4), it updates the mergedWS.
If $x_j$ is the first blind update of the epoch, the algorithm resets the pivot version of data item $x$ to $x_j$. Otherwise, it adds $t_j$'s footprint and all edges from the versions in the pivot barrier to the mergedWS of the data item.
Note that steps (3) and (5) merge the bits in mRS/mWS of reachability trackers accessed by $t_j$ to keep reachability from the pivot versions; when $t_j$ accesses $x$ and $y$ and commits, there exist paths from the pivot versions of $x$ and $y$ to $t_j$ respectively, and thus reachability trackers of both data items should store these paths by merging.


Throughout the algorithm, we access the reachability trackers atomically in a latch-free manner by using the 64-bit data layout.
ESSE copies all reachability trackers to another location at the beginning of the commit protocol.
After testing and modification of mRS/mWS, it performs the CAS operation for all locations to update all fields atomically.
If a CAS operation fails, ESSE retries the commit protocol from the beginning.
If the bit arrays in the reachability tracker match exactly before and after the modifications, we can guarantee atomicity in a lightweight way via load instead of CAS to verify that no changes occurred concurrently.

\subsection{Extension Details}
\label{sec:protocol_comparison}

To apply ESSE to a conventional protocol, we need to add some components: epoch-based group commit, a read/write set of each transaction, reachability tracker, and the correctness testing of overwriters.
Table~\ref{tab:extension} summarizes the components required for the ESSE extension to a protocol.
If the protocol already has the necessary components, ESSE can use them straightforwardly.
For example, Silo has everything ESSE needs, and thus it is one of the preferable protocols.
Note that Table~\ref{tab:extension} does not include reachability tracker, but ESSE adds it for all protocols as described in Figure~\ref{fig:whatisfv}.
In contrast with Silo, the traditional two-phase locking protocol (2PL)~\cite{Gray1992TransactionTechniques} has none of the necessary components.
In addition, 2PL cannot utilize the version omission technique of ESSE since it writes a version to the data item immediately before the commit of transactions.
Moreover, if 2PL is used with some logging algorithms such as ARIES~\cite{Mohan1992ARIES:Logging}, it persists the log immediately.
The protocol of ESSE starts when the uncommitted versions are already installed and persist.
In this case, if the erasing version order is accepted, 2PL+ESSE must undo the uncommitted installed versions before unlocking.
Therefore, OCC is a desirable property for ESSE in terms of performance.

We extended two optimistic protocols: Silo and MVTO to Silo+ESSE and MVTO+ESSE, respectively.
We chose these protocols since they are modern fast protocols of the 1VCC and MVCC types, respectively.
In addition to 2PL, we did not include T/O with TWR for the experiments.
T/O inherently requires a centralized counter to generate monotonically increasing timestamps.
It is known that protocols with such a counter incur serious performance degradation in a many-core environment~\cite{Yu2014StaringCores}.

\textbf{Silo and Silo+ESSE.} Silo~\cite{Tu2013SpeedyDatabases} is an optimistic protocol that obtains state-of-the-art performance on read-intensive workloads.
When an application requests a transaction to commit, Silo acquires locks for all data items in the transaction's write set before installing new versions.
We ported Silo's correctness testing of overwriters to Silo+ESSE.
Silo's testing checks $o_j = \phi$; if a data item in $rs_j$ is overwritten, Silo and Silo+ESSE abort $t_j$.
Note that the original Silo stores records in the leaf nodes of the index directly.
Our implementation of Silo and Silo+ESSE may cause an overhead of a cache miss due to ESSE's reachability tracker indirection.

\textbf{MVTO and MVTO+ESSE.} MVTO~\cite{Bernstein1987ConcurrencySystems,Lim2017Cicada:Transactions} is a timestamp-based CC protocol that has multiversion storage.
We implemented MVTO on the basis of Cicada \cite{Lim2017Cicada:Transactions}, the state-of-the-art MVTO protocol.
It uses per-thread distributed timestamp generation with adaptive backoff, read/write sets for optimistic multi-versioning, and rapid garbage collection.
Note that the original Cicada does not guarantee strict serializability but causal consistency~\cite{DBLP:conf/sosp/PetersenSTTD97}.
We applied epoch-based group commit to this protocol to ensure strict serializability; the original has the 64-bit per-transaction timestamp, but we shortened it to 32-bit and added an epoch number to the upper 32 bits to synchronize between epochs and avoid stale reads.
To test overwriters, we ported Silo's implementation to MVTO+ESSE; when a version $x_i$ in $rs_j$ was not the latest version of data item $x$, MVTO+ESSE failed to test $o_j$ with its erasing version order and delegate its control to MVTO.

\section{Evaluation}
\label{sec:evaluation}

Our experiments used a lightweight, non-distributed, embedded, transactional key-value storage prototype written in C++. It consists of in-memory storage, CC protocols (Silo, Silo+ESSE, MVTO, and MVTO+ESSE), a tree index forked by Masstree~\cite{kohlermasstree}, and a parallel-logging manager according to the SiloR~\cite{Zheng2014FastParallelism}specification.
The experiments were run on a 72-core machine with four Intel Xeon E7-8870 CPUs and 1 TB of DRAM. Each CPU socket had 18 physical cores and 36 logical cores with hyperthreading. The results for over 72 threads showed sublinear scaling due to contention within the physical cores; in the result figures, we use a gray background color to indicate this situation.
Each socket had a 45-MB L3 shared cache.
The transaction logs were separated for each worker thread and flushed into a single solid-state drive. Because all queries were compiled at build time, neither networked clients nor SQL interpreters were used. Each worker thread had a thread-local workload generator that enabled it to input its own transactions.

\begin{figure*}[t]
    \centerline{
        \subfloat
        {
            \includegraphics[width=0.8\textwidth]{benchmark2/artifacts/legend.pdf}
        }
    }
    \vspace{-20pt}
    \addtocounter{subfigure}{-1}
    \centerline{
        \subfloat[Original TATP]
        {\includegraphics[width=0.23\textwidth]{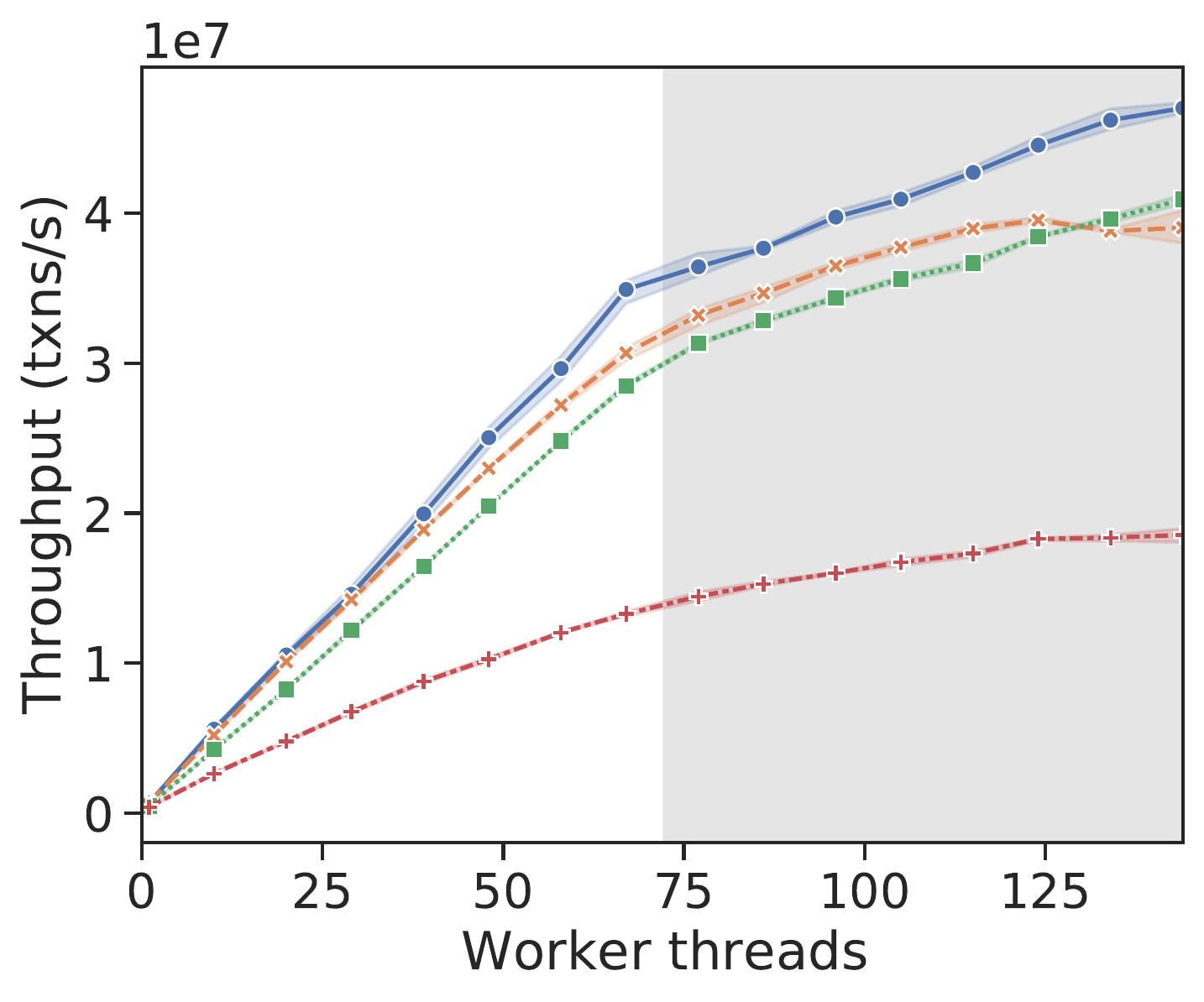}
            \label{fig:tatp-di14}}
        \hspace{2pt}
        \subfloat[Update-intensive modification]
        {\includegraphics[width=0.23\textwidth]{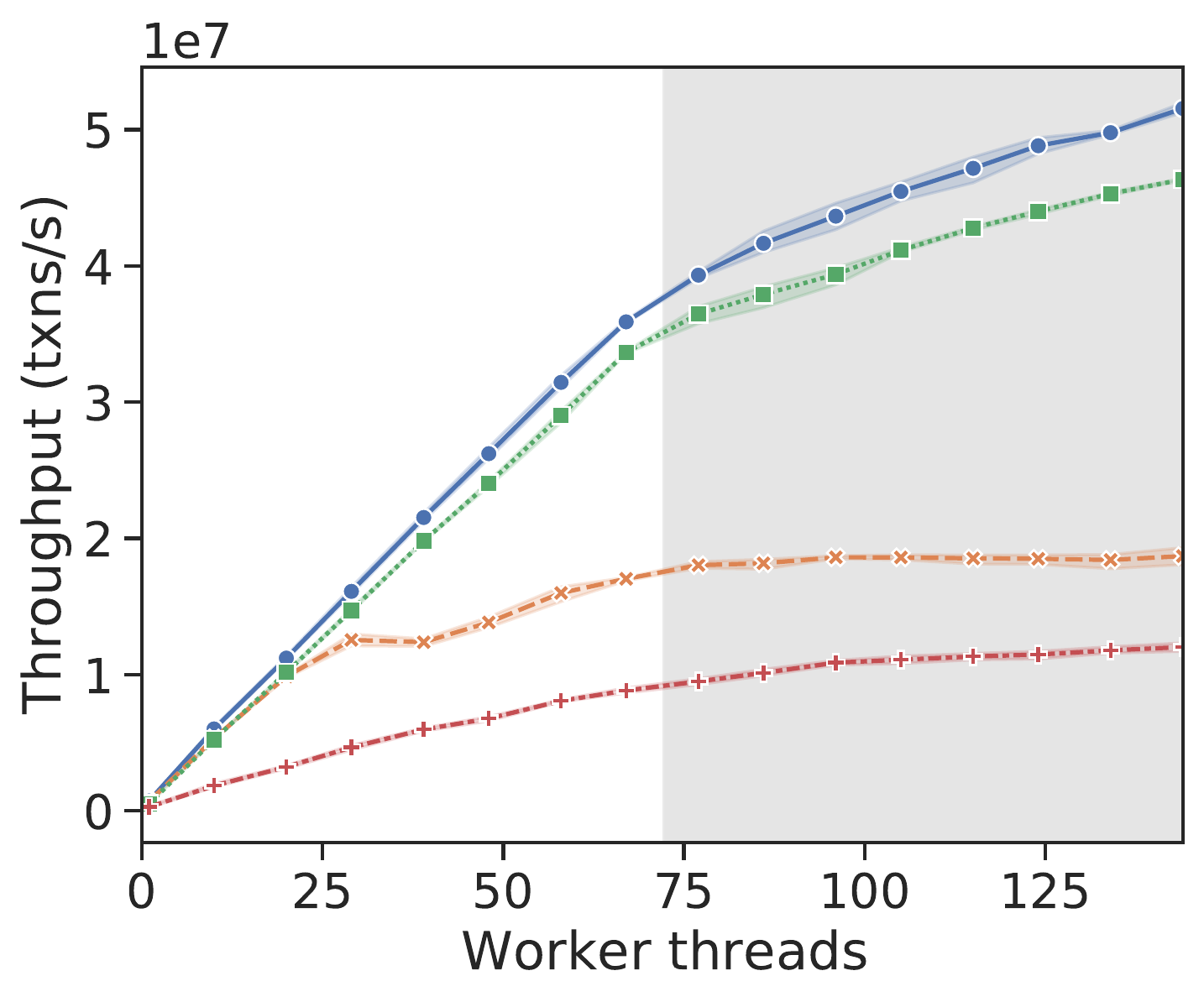}
            \label{fig:tatp-di80}}
        \hspace{2pt}
        \subfloat[Number of updates in (b)]
        {\includegraphics[width=0.23\textwidth]{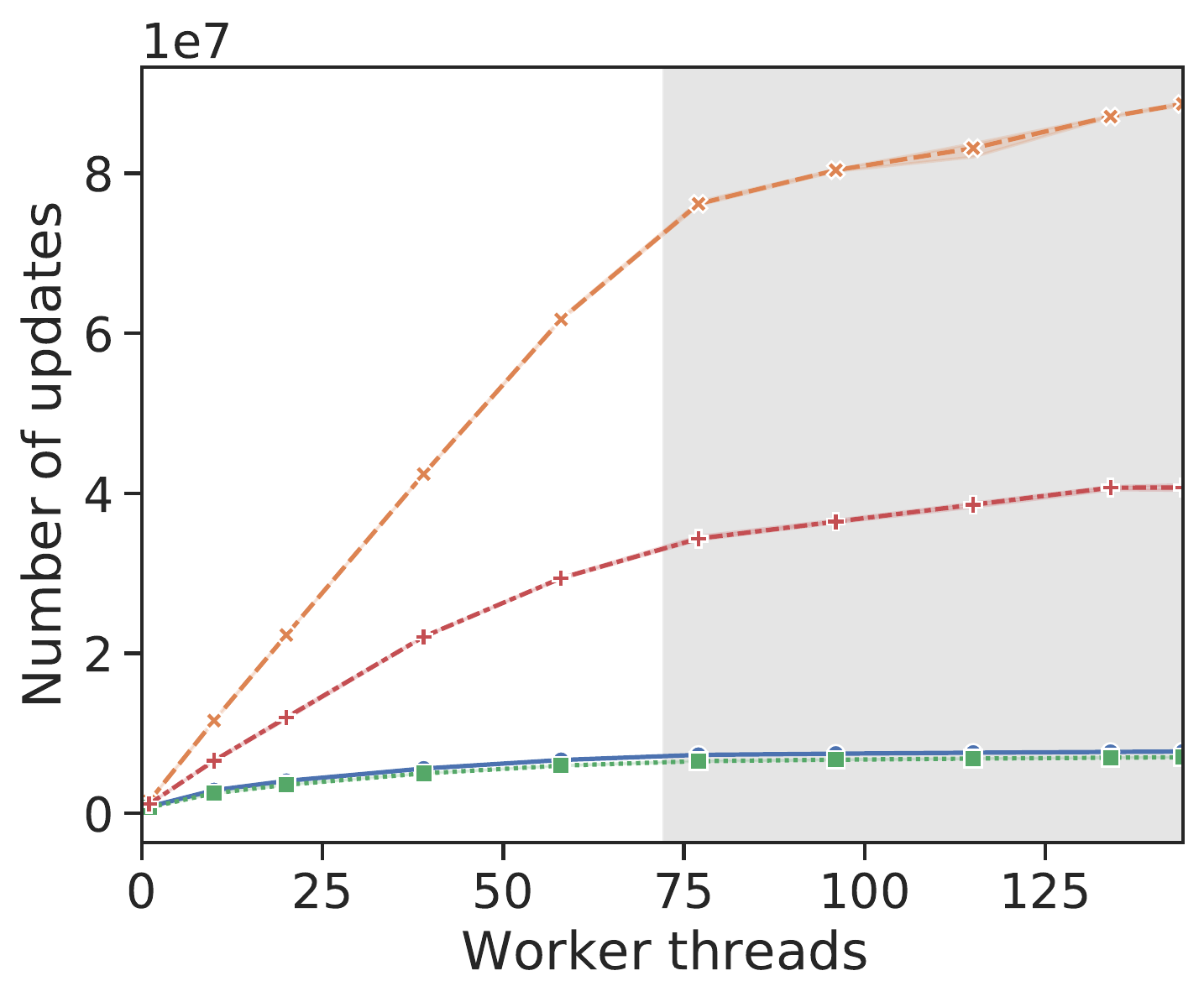}
            \label{fig:tatp-memcomsumption}}
        \hspace{2pt}
        \subfloat[Data ingestion query ratio]
        {\includegraphics[width=0.24\textwidth]{benchmark2/artifacts/TATP-DI-Query-Ratio/result.pdf}
            \label{fig:tatp-varyingDI}}}
    \vspace{0pt}
    \caption{TATP benchmark results}
    \label{fig:tatp}
\end{figure*}

We selected three workloads generated by benchmark specifications: TATP~\cite{TATP}, YCSB~\cite{Cooper2010BenchmarkingYCSB}, and TPC-C~\cite{10.5555/1946050.1946051} benchmarks. We selected TATP as the benchmark for our intended application such as IoT/Telecom applications, including data ingestion queries. We selected YCSB as the ideal scenario for ESSE in terms of performance because it includes a tremendous number of blind updates. Finally, we selected TPC-C as a counterpoint benchmark with the least performance benefit because it includes no blind updates.

\textbf{TATP benchmark.}
TATP represents the workload for a telecommunication company.
It includes 16\% data ingestion queries that generate a flood of blind updates for managing changes in a subscriber's current location or the profile data.
70\% of the rest of queries consist of \texttt{GET\_SUBSCRIBER\_DATA} and \texttt{GET\_ACCESS\_DATA}.
They both retrieve the latest and correct data snapshot updated by data ingestion queries to operate the telecom base station.
We implemented the benchmark in accordance with its specifications.
In addition to the workload obeying the original specifications, we added workloads with various percentages of blind updates from the original's 16\% to emulate our intended IoT/Telecom applications.
The query for which we varied the percentage was \texttt{UPDATE\_LOCATION}.

\textbf{YCSB benchmark.} This workload generator is representative of conventional large-scale online benchmarks. Because the original YCSB does not support a transaction with multiple operations, we implemented a YCSB-like workload generator in our prototype, similarly to DBx1000~\cite{Yu2014StaringCores}.
Specifically, each transaction accessed four data items chosen randomly according to a Zipfian distribution with parameter $\theta$. Each data item had a single primary key and an 8-byte additional column. We populated our prototype implementation as a single table with 100K data items.

\textbf{TPC-C benchmark.} This is an industry-standard for online transaction processing. It consists of six tables and five transactions that simulate the information system of a wholesaler. Note that the TPC-C benchmark does not have any blind updates; all write operations are inserts or read-modify writes. Thus, SSE could not commit any transactions with its erasing version order. We implemented TPC-C full mix including all five transactions.
Phantom anomalies were prevented by the same method with Silo: we scan the tree index again at the commit of each transaction.

\begin{figure}[t]
    \vspace{-12pt}
    \subfloat[With 1 thread]
    {\includegraphics[width=0.45\textwidth]{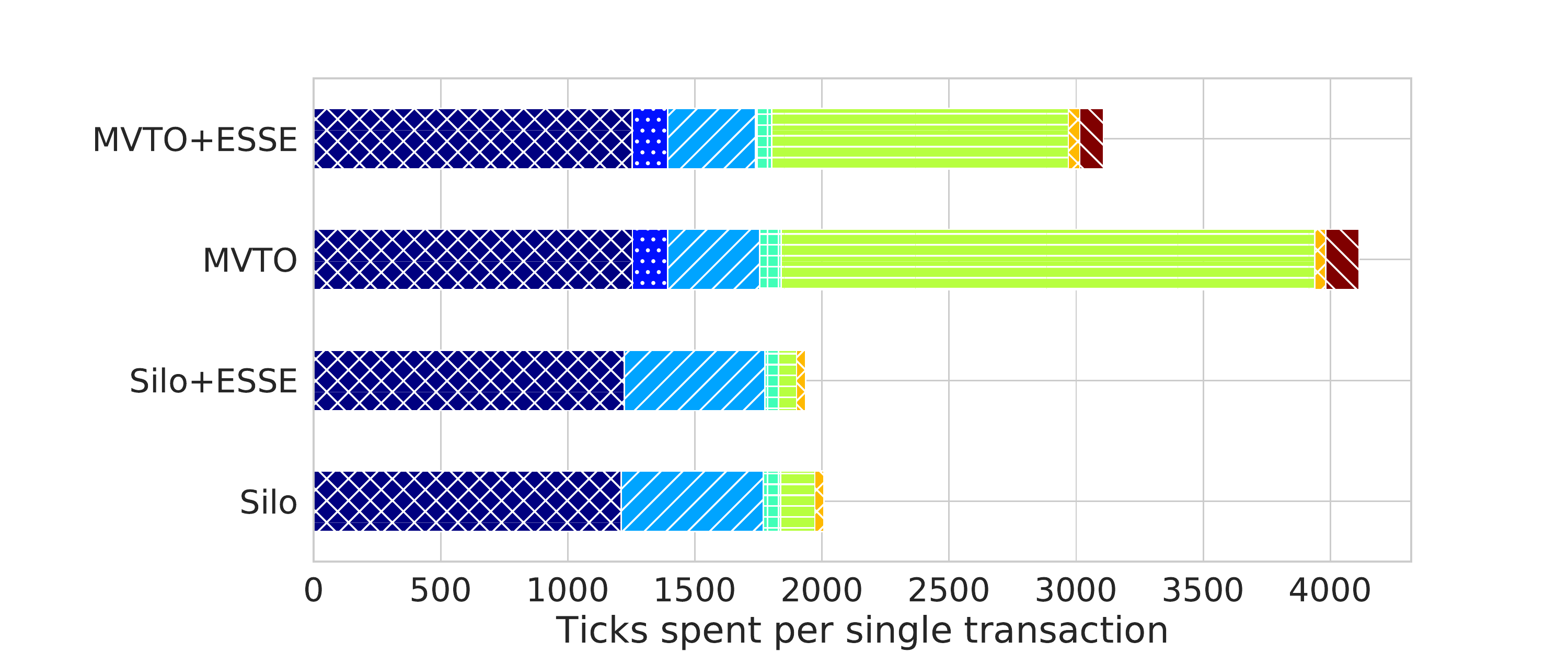}
        \label{fig:tatp-breakdown-di14-1}}
    \vfill
    \vspace{-6pt}
    \subfloat[With 144 threads]
    {\includegraphics[width=0.45\textwidth]{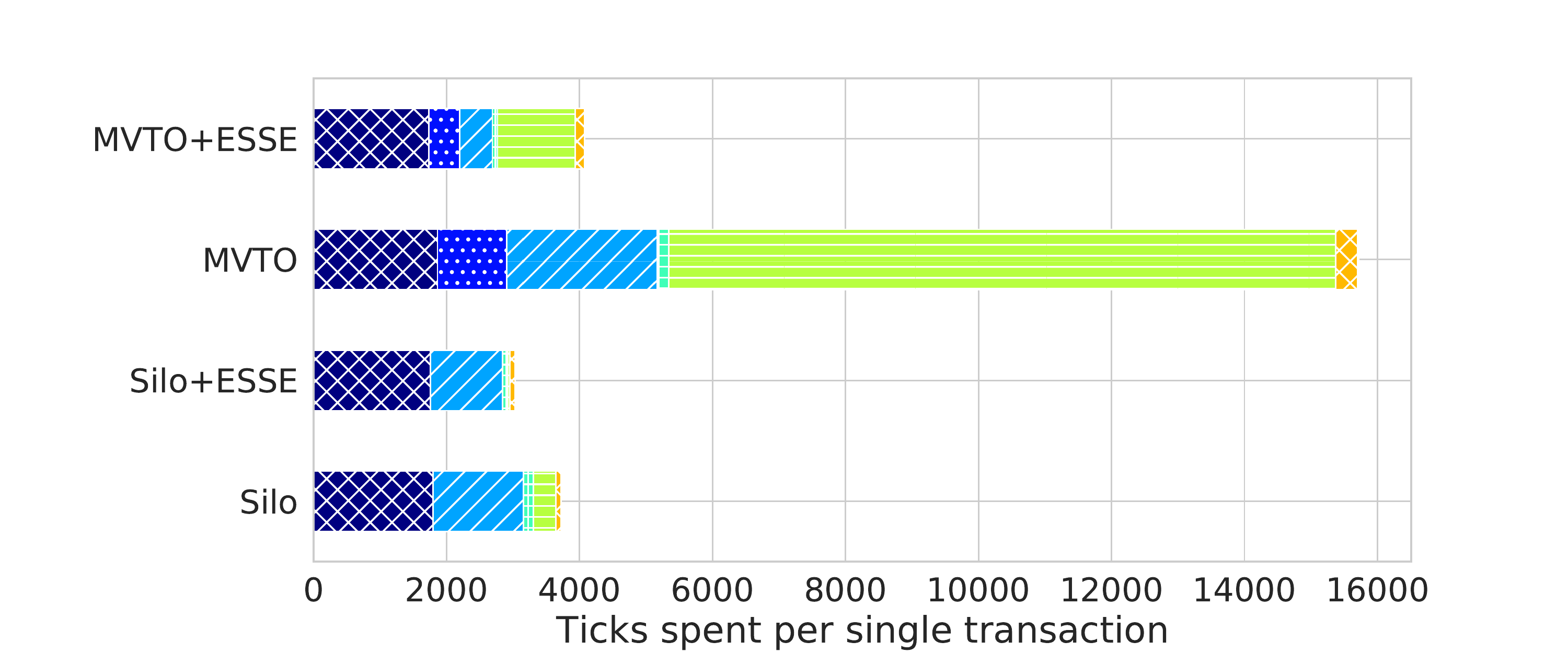}
        \label{fig:tatp-breakdown-di14-144}}
    \vspace{-10pt}
    \subfloat{
        \includegraphics[width=0.55\textwidth]{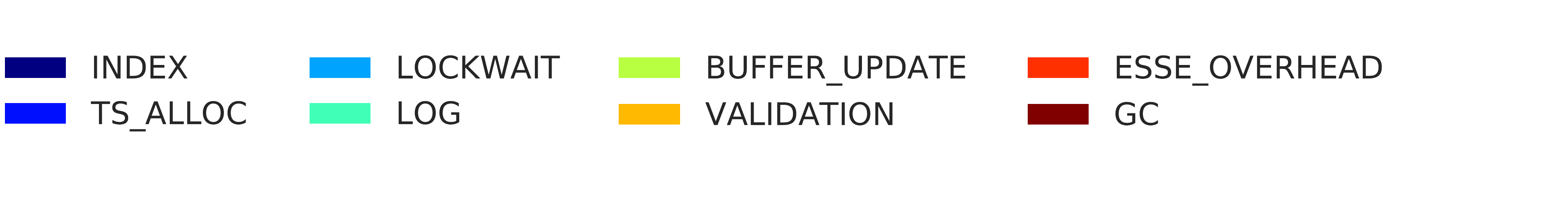}
    }
    \vspace{-10pt}
    \caption{Runtime breakdowns of Figure~\ref{fig:tatp-di14}}
    \label{fig:tatp-breakdown-di14}
    \vspace{-4pt}
\end{figure}

\begin{figure}[t]
    \vspace{-10pt}
    \centering
    \subfloat{
        \includegraphics[width=0.25\textwidth]{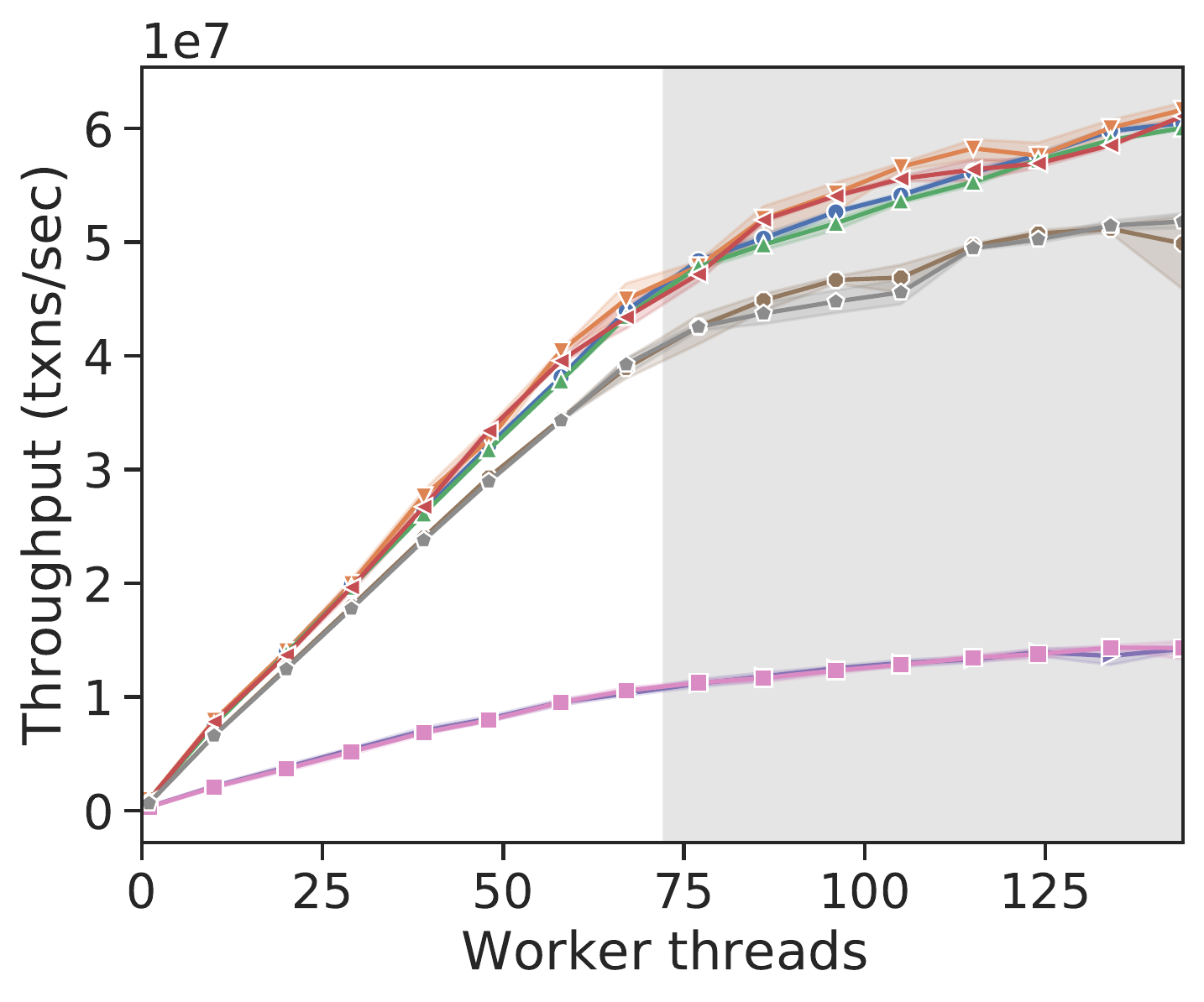}
    }
    \subfloat{
        \raisebox{-15pt}{
            \includegraphics[width=0.18\textwidth]{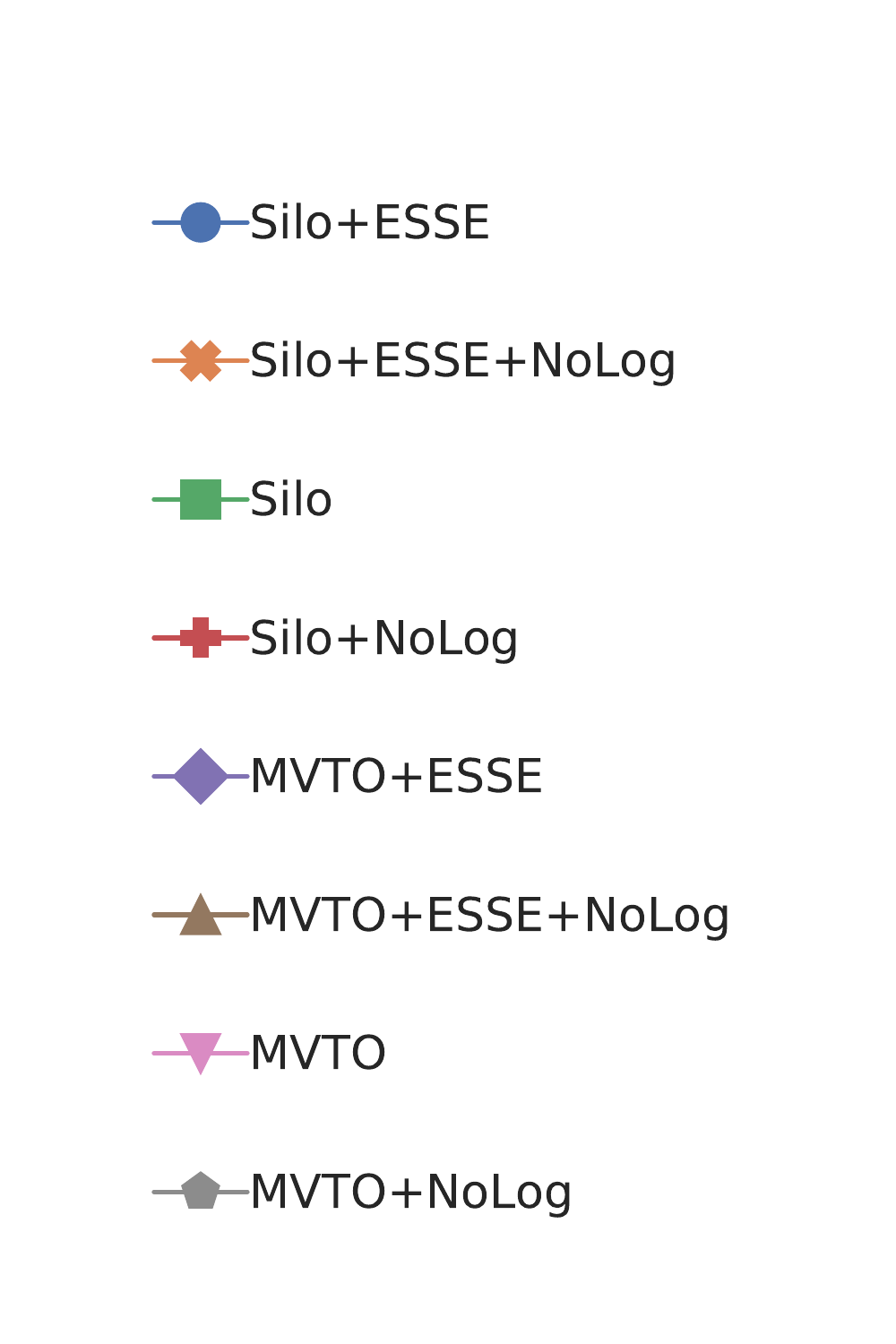}
        }
    }
    \vspace{-12pt}
    \caption{TATP results with logging-disabled protocols}
    \label{fig:tatp-nologs}
    \vspace{-12pt}
\end{figure}


\begin{figure*}[t]
    \centerline{
        \subfloat
        {\includegraphics[width=0.8\textwidth]{benchmark2/artifacts/legend.pdf}}}
    \vspace{-20pt}
    \addtocounter{subfigure}{-1}
    \centerline{
        \subfloat[YCSB-A: scalability]
        {\includegraphics[width=0.24\textwidth]{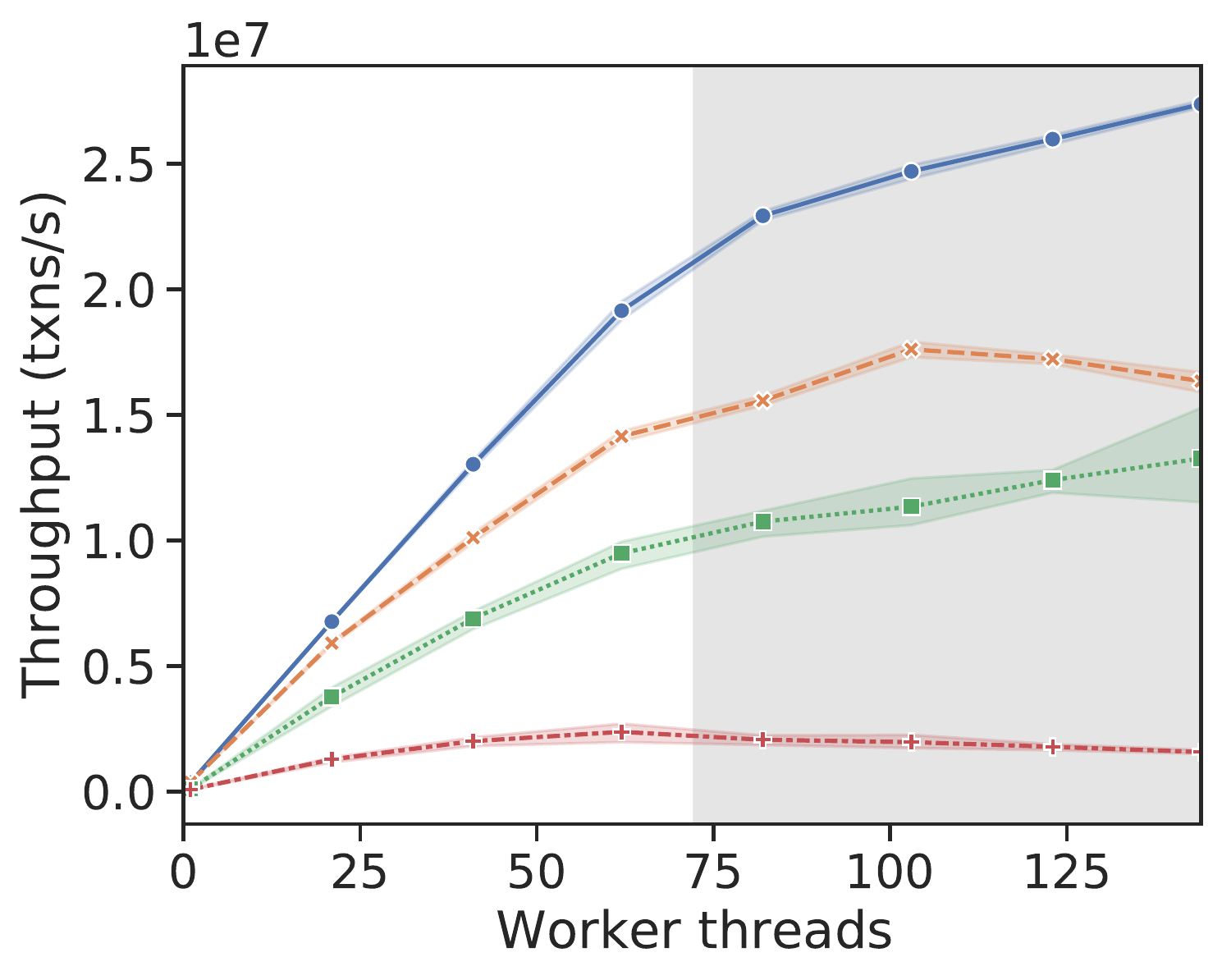}
            \label{fig:ycsb-a-high}}
        \subfloat[YCSB-A: epochs]
        {\includegraphics[width=0.24\textwidth]{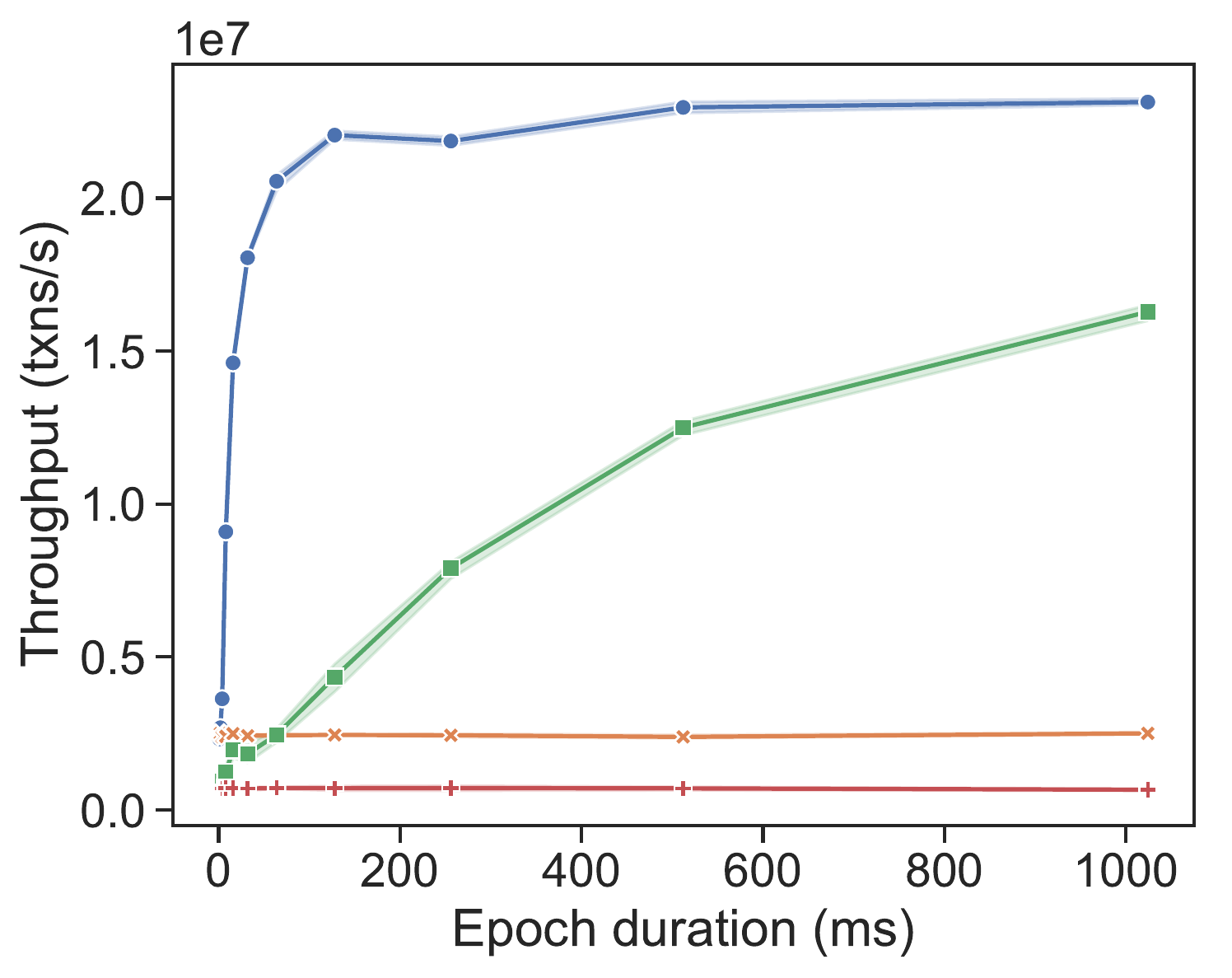}
            \label{fig:ycsb-variable-epoch}
        }
        \subfloat[YCSB-A: size of read/write set]
        {\includegraphics[width=0.23\textwidth]{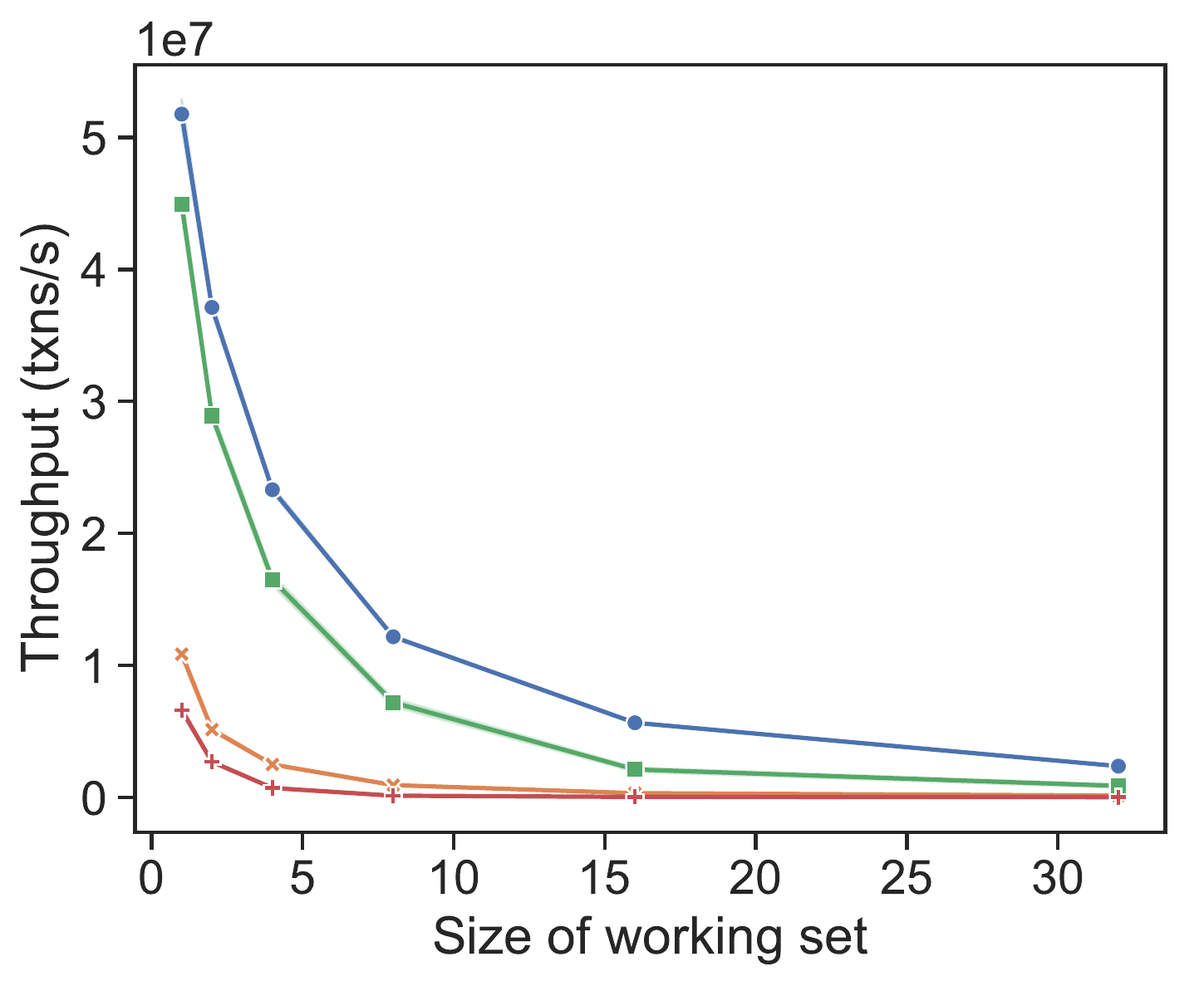}
            \label{fig:ycsb-variable-wssize}}
        \subfloat[YCSB-B: scalability on low contention rate ($\theta=0.2$)]
        {\includegraphics[width=0.24\textwidth]{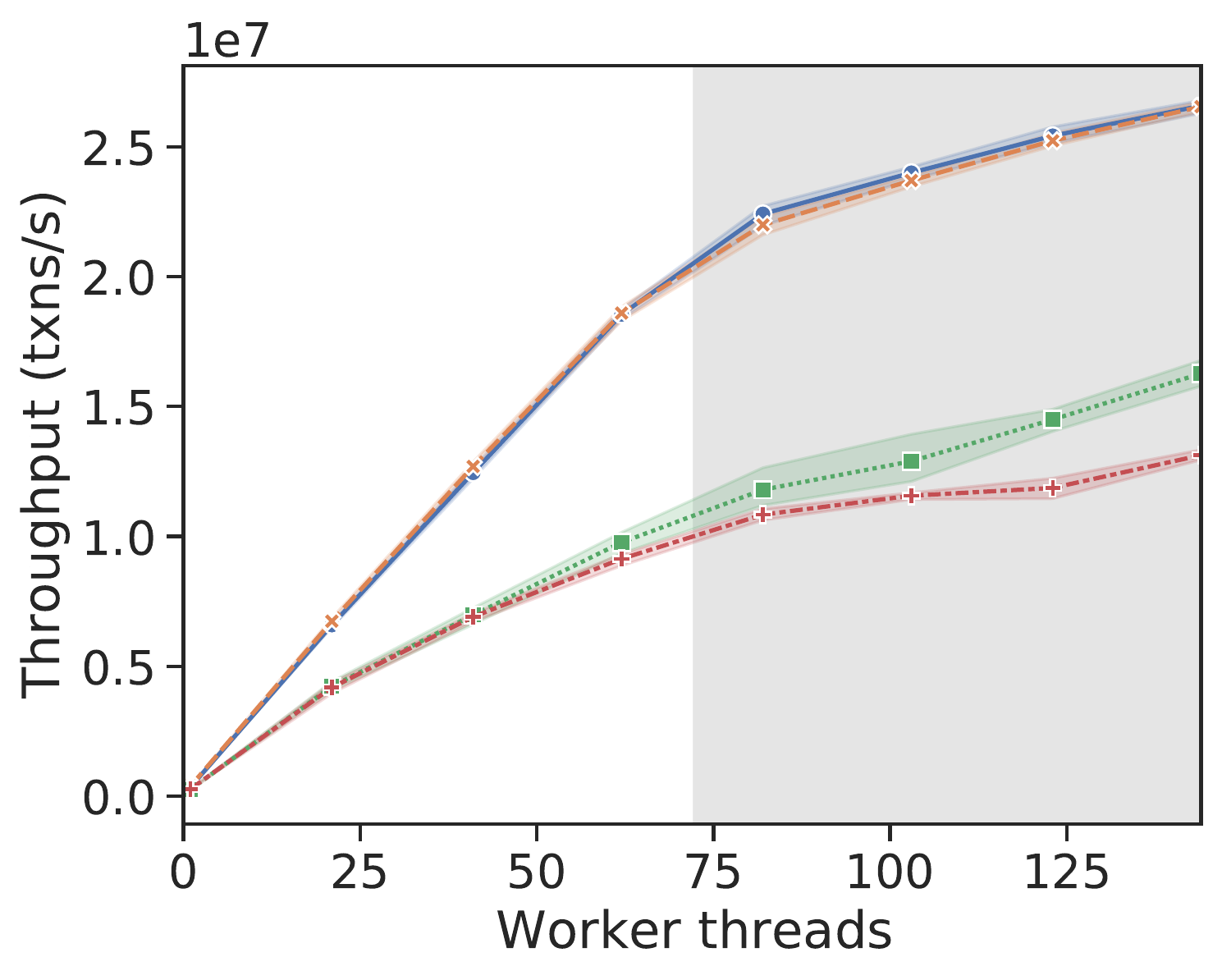}
            \label{fig:ycsb-b-low}}}
    \caption{YCSB benchmark results}
\end{figure*}

\begin{figure}[t]
    \vspace{-20pt}
    \subfloat{
        \includegraphics[width=0.27\textwidth]{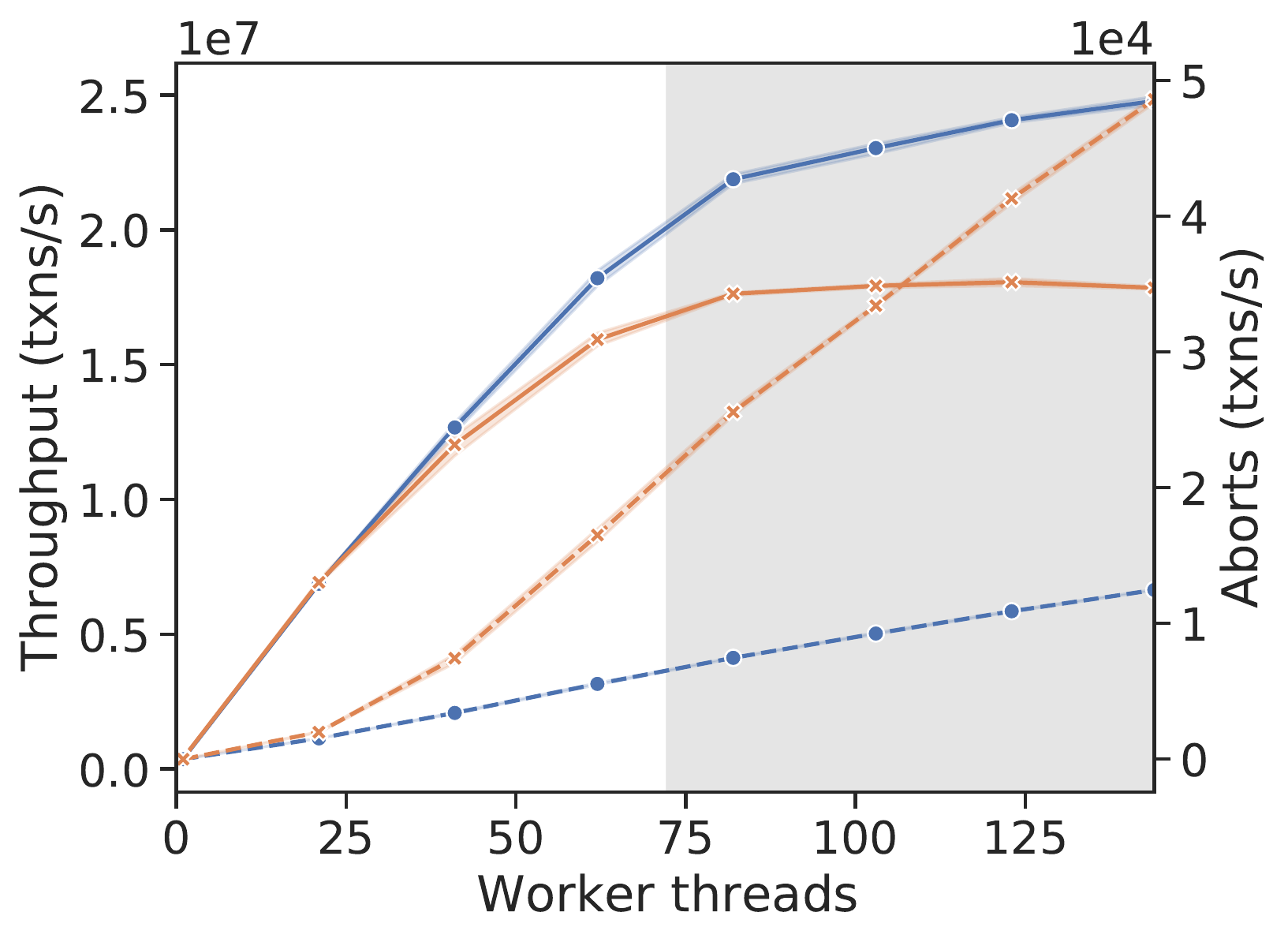}
    }
    \subfloat{
        \raisebox{-40pt}{
            \includegraphics[width=0.20\textwidth]{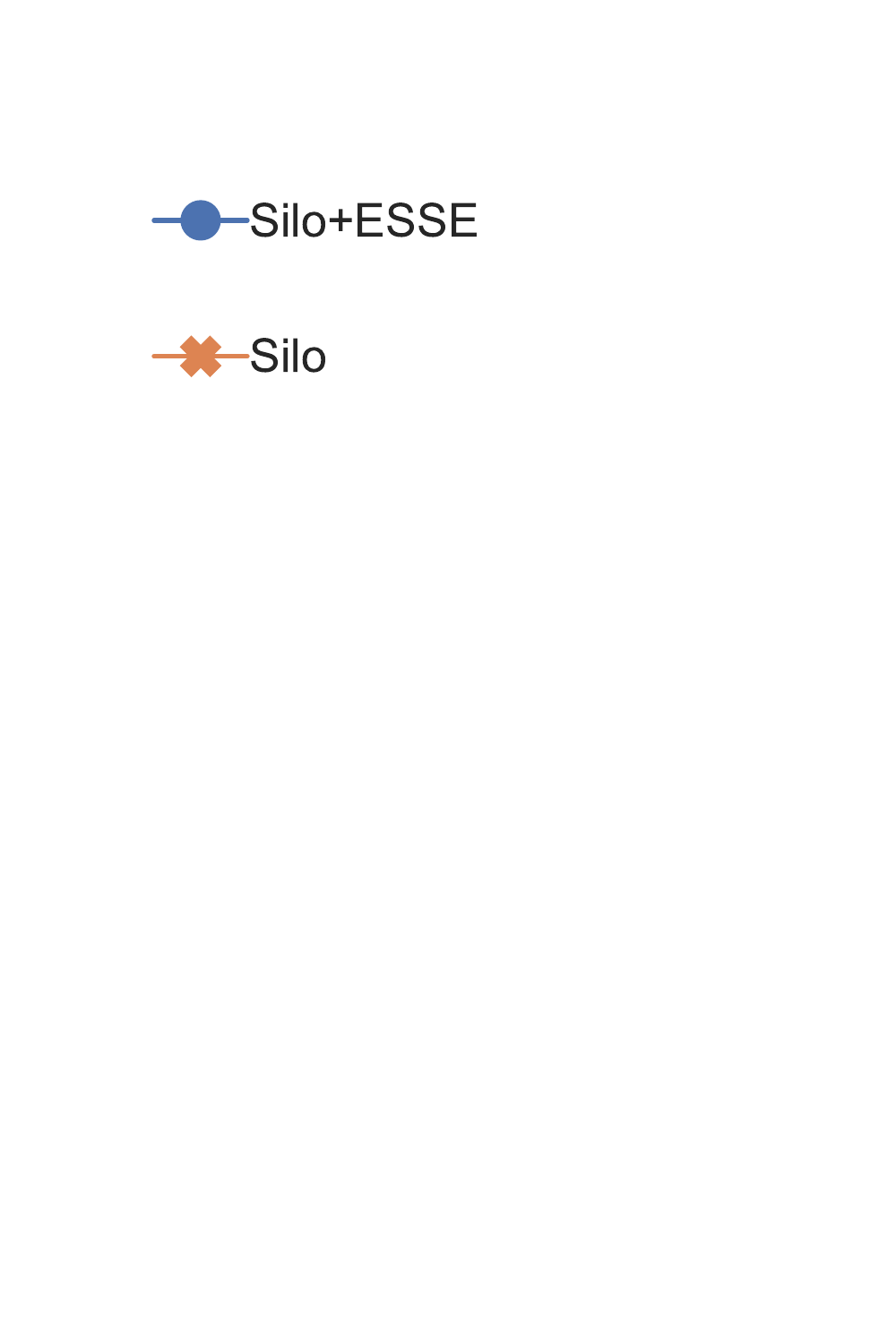}
        }
    }
    \vspace{-40pt}
    \caption{YCSB-B: scalability under high contention ($\theta=0.9$) with throughputs (solid lines) and number of aborts (dashed lines)}
    \label{fig:ycsb-b-high}
    \vspace{-8pt}
\end{figure}


\subsection{TATP Benchmark Results}
\label{sec:tatp}

Figures \ref{fig:tatp-di14} and \ref{fig:tatp-di80} show the results for the original TATP and its update-intensive modification, respectively. In both cases, ESSE improved the performance of the original protocols, and the improvement for the update-intensive modification was particularly remarkable. Because most of the blind updates are safely omittable versions and thus were not installed in physical memory, Silo+ESSE and MVTO+ESSE achieved 2.7$\times$ and 2.5$\times$ performance improvements, respectively. In contrast, the performance of Silo for the update-intensive modification was drastically degraded by lock contention. Although Silo is a read-lock-free protocol, Silo's write operations require lockings that reduce parallelism. MVTO exhibited the poorest performance on both the original TATP and the modification. Although MVTO does not acquire locks for write operations, its throughput degrades because it allocates memory to create new versions. SSE overcame these weaknesses of Silo and MVTO and thus improved the performance drastically.

Figure~\ref{fig:tatp-memcomsumption} shows the number of updates for TATP with the update-intensive modification. For Silo and Silo+ESSE, we counted the number of in-place updates with locks. For MVTO and MVTO+ESSE, we counted the number of out-of-place creating new versions. The results indicate that the two ESSE protocols rarely performed actual updates, because most of the writes generated safely omittable versions.

Figure~\ref{fig:tatp-varyingDI} shows the results for the TATP benchmark with various data ingestion query rates and 144 fixed worker threads. As the percentage of data ingestion queries became larger than the original 16\%, the throughputs of Silo and MVTO dropped. The reasons were that Silo suffered from lock contention on the same data item and MVTO suffered from the management of multiple versions in physical memory. In contrast, the ESSE protocols outperformed the originals and their throughput was not degraded as the percentage increased. Furthermore, when the percentage exceeded 80\%, the performance of these extended protocols improved. This is because SSE provides cache efficiency; in this setting, the clients requested to execute blind writes into almost the same data items, and ESSE thus generated a tremendous number of safely omittable versions. As a result, it rarely installed new versions, and almost all read operations received the same versions that are rarely evicted from the CPU caches.

Figure~\ref{fig:tatp-breakdown-di14} shows the runtime breakdown for the results shown in Figure~\ref{fig:tatp-di14}. With a single thread, the top consumers of CPU ticks were the \texttt{INDEX} block for Silo and Silo+ESSE and the \texttt{BUFFER\_UPDATE} block for MVTO and MVTO+SSE. The overhead of ESSE was negligible for both ESSE protocols. With 144 threads, the primary consumers of CPU ticks were still the same as with a single thread, and the overhead of ESSE was again negligible. Note that ESSE dramatically reduced the number of CPU ticks spent waiting for the \texttt{LOCKING} and \texttt{BUFFER\_UPDATE} blocks. These two overheads were reduced by using the reachability trackers. When a transaction is committed using ESSE's erasing version order, the write operations avoid locks and access only the reachability trackers in a latch-free manner.

As described in Section~\ref{sec:introduction}, our intended applications consist of data ingestion queries and real-time operations. However, there is sometimes a need to support historical analysis~\cite{8509391} for all submitted versions. To support such historical analysis queries, we need to flush the persistent logs for each version as accumulated data. Even if we omit some versions in CC protocols, we also require log persistence for safely omittable versions. Figure~\ref{fig:tatp-nologs} shows the performance results on the TATP benchmark for the protocols with the logging feature disabled. We set the percentage of data ingestion queries to 80\%. We can see that the performance of the ``NoLog'' protocols was almost the same as that of the protocols with logging. This indicates that the performance improvement of the SSE does not come from avoiding log persistence; rather, SSE improves the performance by reducing memory consumption and avoiding lock mechanisms.

\subsection{YCSB Benchmark Results}
\label{sec:ycsb}

Figure~\ref{fig:ycsb-a-high} shows the results for the YCSB-A workload with a medium contention rate ($\theta = 0.6$). YCSB-A defines the proportion of operations as 50\% reads and 50\% blind writes. Prior works showed that conventional protocols perform poorly on such write-contended workloads~\cite{Wu2017AnControl, Yu2014StaringCores, Kim2016Ermia:Workloads, Fan:2019:OVG:3342263.3360357}. Note that YCSB requests the database population before benchmarking; all data items are inserted before measurements. Therefore, in YCSB, all write operations are blind writes. Thus, once the pivot versions for each 40-ms epoch are marked, ESSE's correctness testing rarely fails. We expect that ESSE protocols can avoid installing a tremendous number of blind updates and to improve the performance accordingly. In fact, as shown in Figure~\ref{fig:ycsb-a-high}, the ESSE protocols achieved higher throughput than the original ones. In the best case with 144 threads, the throughput of MVTO+ESSE was more than 20$\times$ better than that of the original protocol. As this workload produces more WAW conflicts than TATP, the length of the linked lists of MVTO tends to be longer. Because the longer linked lists increased the overhead of version traversing for both reads and writes, MVTO's performance was degraded. In contrast, MVTO+ESSE reduced the length of the linked lists because it avoided unnecessary version allocation by omitting blind updates with ESSE.

Figures~\ref{fig:ycsb-variable-epoch} and \ref{fig:ycsb-variable-wssize} show the results for YCSB-A when we varied the two influential parameters for ESSE.
We tested these workloads with 72 threads and $\theta$ = 0.6.
Figure~\ref{fig:ycsb-variable-epoch} shows the results for various epoch sizes.
As described in Section~\ref{sec:implementation}, ESSE uses epoch-based group commits and tests the concurrency among transactions by using epochs. Hence, a longer epoch duration makes more transactions concurrent and reduces the number of ESSE correctness testing failures. Therefore, to investigate the effect of the epoch duration, we tested the YCSB-A with various durations. As shown in Figure~\ref{fig:ycsb-variable-epoch}, the throughput of the ESSE protocols increased with the duration. As a result, we can improve the performance by increasing the epoch duration as much as the application allows.
Next, Figure~\ref{fig:ycsb-variable-wssize} shows the results for various sizes of the read/write sets. 
As the read/write set size increased, the performance of the ESSE protocols decreased to the level of the original protocols. Because the reachability tracker described in Section~\ref{sec:implementation} has two bloom filters (mRS/mWS) with a size of only 16 bits, as the number of operations in a transaction increases, the more often the correctness testing of successors fails because of false positives in the filters.

Figure~\ref{fig:ycsb-b-low} shows the resuls for the read-mostly YCSB-B workload under low contention rate ($\theta = 0.2$).
There are almost no safely omittable versions in this workload since most of the operations are read operations, and contention rarely occurs.
Therefore, there is little benefit from the ESSE's performance improvement; on the contrary, the overhead of ESSE may be painful factor for the performance.
However, Figure~\ref{fig:ycsb-b-low} shows that ESSE protocols performs similar performance to the original protocols.
It indicates the low overhead property of ESSE.
In such workloads, a transaction gives up the ESSE protocol quickly generating an erasing version order, and thus it does not execute the correctness testing.
As in mentioned in Section~\ref{sec:control_flow}, a transaction first checks the epoch number in the indirection of each data item.
ESSE requires all epoch numbers and the transaction’s epoch are the same; however, it is rarely satisfied in this workload.
Therefore, the transaction gives up ESSE's protocol quickly and delegated the control to the baseline protocol.

Figure~\ref{fig:ycsb-b-high} shows the performance results of YCSB-B under high contention rate ($\theta = 0.9$). The dashed lines represent the throughput and the solid lines represent the number of aborts. The YCSB-B workload was expected to be unsuitable for testing our approach because it specifies the proportion of read operations as 95\%~\cite{Cooper2010BenchmarkingYCSB}; SSE and ESSE can improve the performance on blind updates, but YCSB-B rarely executes them. Nevertheless, the ESSE protocols exhibited performance comparable to that of the original protocols, and surprisingly, Silo+ESSE outperformed the original Silo. This improvement indicates that the version omission technique is beneficial for other transactions. In this case, Silo+ESSE changed overwriting operations for the latest versions to omission for stale versions, thus reducing the abort rate because its validation fails when the latest versions are overwritten. In Figure~\ref{fig:ycsb-b-high}, the number of aborts for Silo was higher than the throughput with 25 threads. This was because only 5\% of write operations forced Silo to abort the 95\% of read operations. In contrast, Silo+ESSE kept a lower abort rate than the original Silo, and it improved its performance on this read-mostly workload, because ESSE prevented writes of the latest versions.

\subsection{TPC-C Benchmark Results}
\label{sec:tpc-c}
None of the five queries in the TPC-C benchmark contained blind writes except inserts.
This means that there were no transaction commits with SSE's erasing version order for this workload.
Although our target is IoT/Telecom applications and their data ingestion queries containing a tremendous number of blind updates, we also tested our approach on this benchmark in order to illustrate ESSE's low-overhead property.

To analyze this low-overhead property, we ran the TPC-C benchmark with a single warehouse. This high-contention scenario represents the worst case for ESSE because the reachability tracker for each data item must be frequently updated even though it is never used. Figure~\ref{fig:tpcc-tps} shows the throughput with respect to the number of threads.
Both Silo and Silo+ESSE scale up to 32 cores, similar to the experimental results in the original paper.
MVTO+ESSE's overhead was negligible, yet its performance was comparable to that of the original protocol.
MVTO's performance bottleneck on TPC-C was version traversing or buffer update, so the overhead of the reachability tracker did not affect performance.
In contrast, the throughput of Silo+ESSE was about 0.75$\times$ lower than that of the original protocol~\footnotemark[3].
This was because most of the queries in TPC-C lead to read-modify-writes into the latest versions. Because transactions that perform a read-modify-write operation into a version larger than the pivot version may satisfy the two conditions of Theorem \ref{theo:epoch-barrier}, their footprints must be stored in the reachability tracker for each data item.

\afterpage{
\begin{figure}[t]
    \vspace{-20pt}
    \subfloat{
        \includegraphics[width=0.24\textwidth]{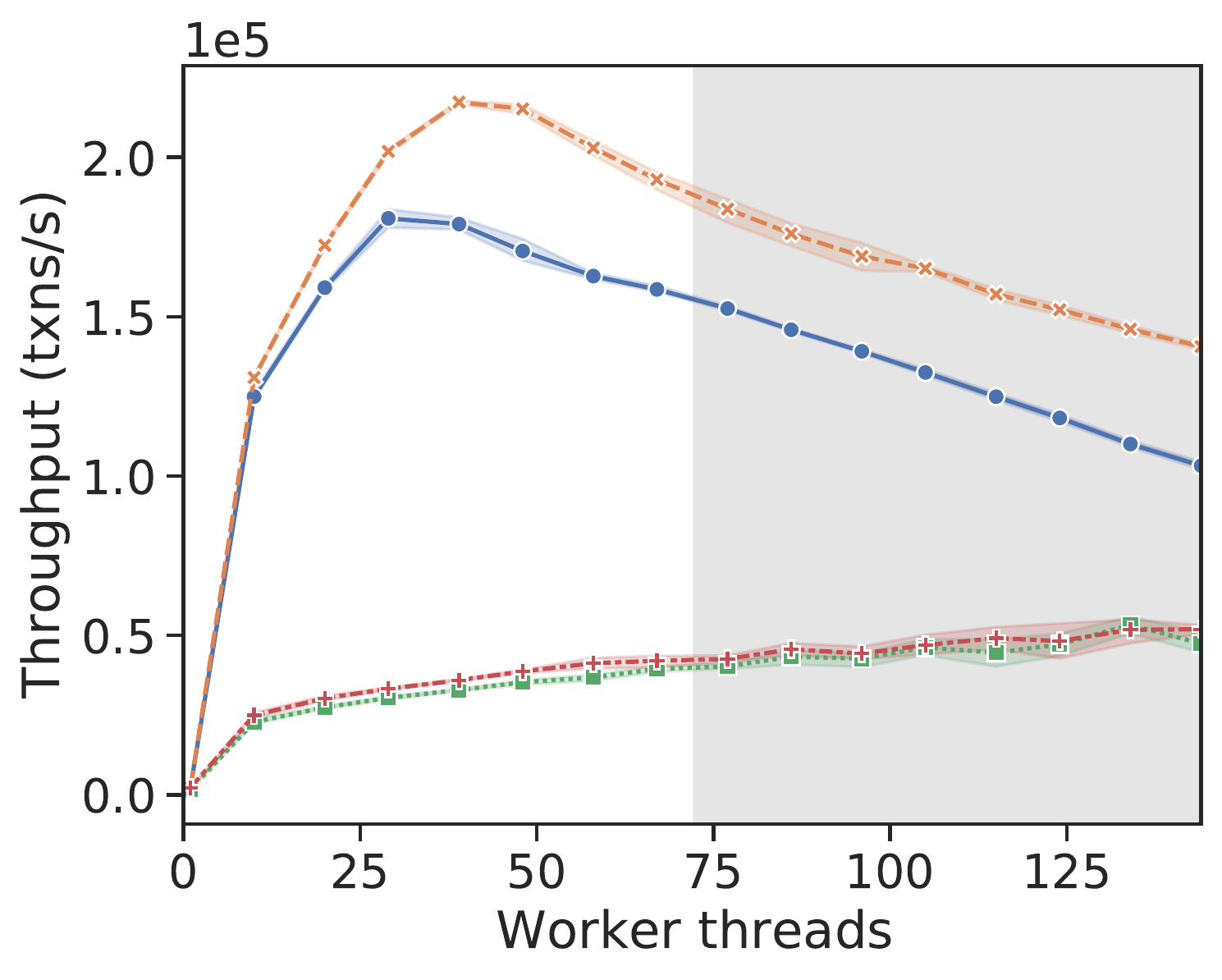}
    }
    \subfloat{
        \raisebox{-40pt}{
            \includegraphics[width=0.20\textwidth]{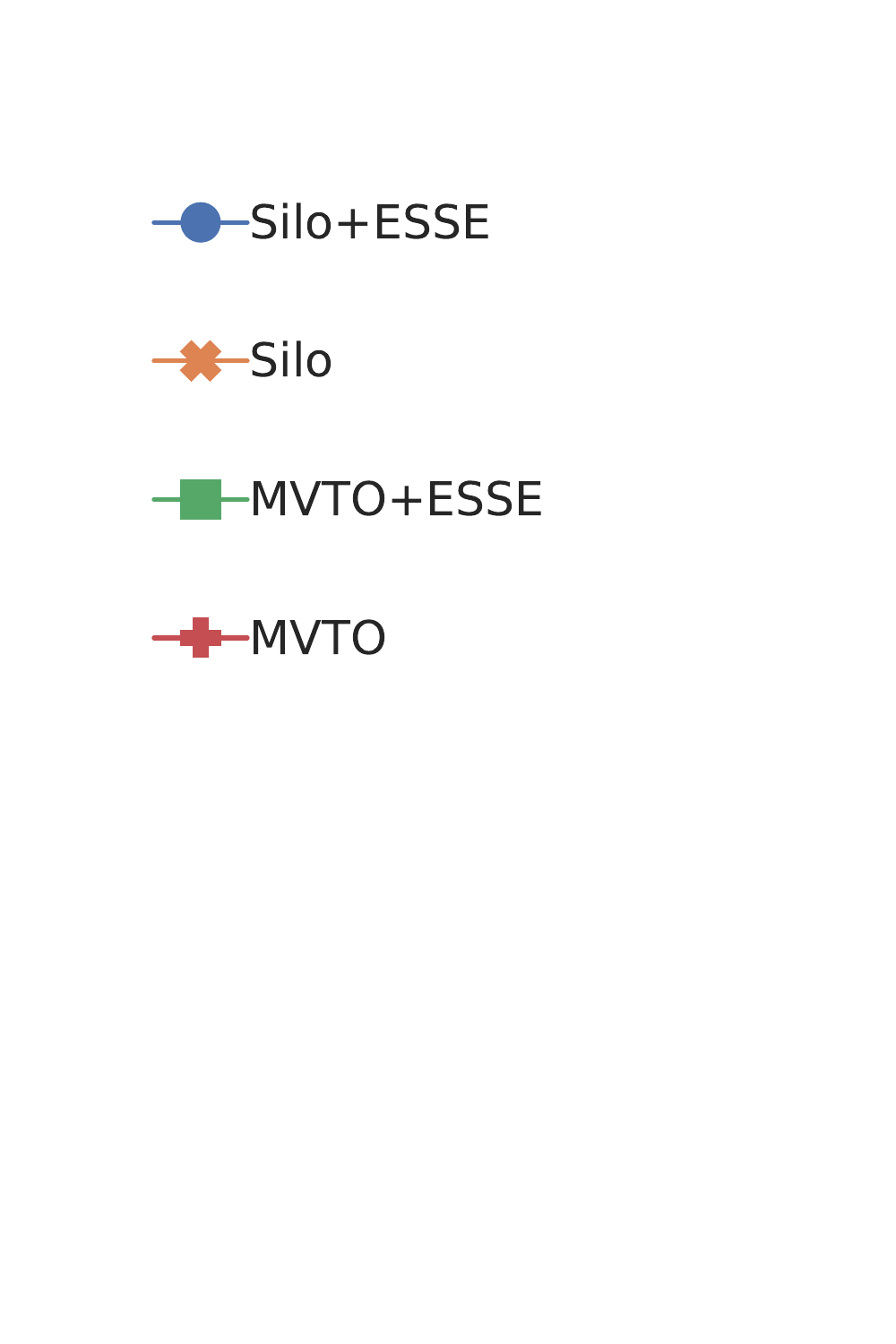}
        }
    }
    \vspace{-40pt}
    \caption{TPC-C: single warehouse}
    \label{fig:tpcc-tps}
    \vspace{-8pt}
\end{figure}

}

\section{Related Work}
\label{sec:relatedwork}

The data ingestion queries are used to aggregate updates from sensors and mobile devices in IoT/Telecom applications.
This query has long been discussed in non-transactional systems such as streaming databases~\cite{DBLP:conf/edbt/GroverC15,DBLP:conf/vldb/TuLPY06}.
In recent years, however, the importance of transactional processing (strict serializability) has been studied~\cite{meehan2017data,DBLP:journals/pvldb/WangC19,10.1145/2882903.2899406}, since reading inconsistent or outdated data causes serious problems in real-world actuators.
We need to process ingested data from sensors and mobile devices efficiently and consistently.
In order to process such a massive number of updates efficiently, it is essential to omit updates that are unnecessary for read operations.
Streaming systems use load shedding~\cite{DBLP:conf/vldb/TuLPY06} or backpressure~\cite{DBLP:journals/tmc/TahirYM18} to put a rate limit of submitting data.
However, these techniques do not ensure transactional correctness; they omit the data before submitting it to databases.
Then we cannot use CC protocols to choose the version order to provide correct data snapshot.
In transaction processing, Thomas's write rule~\cite{Thomas1977ABases} can omit the data and ensure serializability.
However, to the best of our knowledge, TWR does not guarantee strict serializability.
In addition, TWR is applicable only for single-version timestamp ordering protocol, which is obsolete on modern in-memory databases.
SSE also performs the write omission and these methods while preserving the transactional correctness and it is applicable to various CC protocol.

The commutative theory is another example to increase the parallelism on write operations.
Commutative systems such as CRDT~\cite{DBLP:conf/sss/ShapiroPBZ11} and Doppel~\cite{DBLP:conf/osdi/NarulaCKM14}, define a commutative operation set such as ADD or INCR.
Under commutative systems, these operations can be executed in parallel with preserving consistency.
In contrast, SSE focuses on non-commutative operations on the basis of the traditional page-model interface with only READ and WRITE.
As a result, SSE can optimize the performance of IoT/Telecom applications that do not include commutative operations.

Another example of a protocol that generates omittable versions is deterministic databases~\cite{Ren2014AnSystems,Thomson2012Calvin:Systems}. A deterministic database uses the batching approach for concurrency control: centralized transaction managers collect transactions and separate them into batches. Faleiro et al. devised \textit{lazy transaction execution}~\cite{Faleiro2014LazySystems}, in which a deterministic database can generate omittable versions. By delaying the execution of operations, they executes only blind writes that eventually become the latest versions in each batch. After that, database removes other versions. In contrast with lazy evaluation, however, SSE is applicable to non-deterministic protocols because it does not require a centralized transaction manager or prior knowledge of the transactions.

Multi-version concurrency control protocols~\cite{Lim2017Cicada:Transactions, Fekete2005MakingSerializable, Larson2011High-performanceDatabases} can hold multiple versions for each data item.
Multiversion read avoids the high abort rates of single-version protocols, especially for workloads that include long transactions~\cite{Yu2014StaringCores,Wu2017AnControl,Kim2016Ermia:Workloads}.
All MVCC protocols can theoretically generate multiple version orders by using MVSG. However, conventional protocol~\cite{Larson2011High-performanceDatabases,Lim2017Cicada:Transactions,Kim2016Ermia:Workloads,Ports2012SerializablePostgreSQL,Kemper2011HyPer:Snapshots} generates only a single version order.
This is because the decision process to find a suitable one from all possible orders is NP-complete~\cite{Bernstein1983MultiversionAlgorithms,Papadimitriou1982OnVersions}, as mentioned in Section~\ref{sec:validation}. SSE also does not give the exact solution, but it reduces the computational cost by adding only a single candidate version order, which is preferable for generating safely omittable versions.

\section{Conclusion}
\label{sec:conclusion}
We presented a protocol extension method, scheduling space expander (SSE), and its optimized implementation named ESSE. SSE and ESSE can extend various protocol so that it can, in theory, test an additional version order for the purpose of generating safely omittable versions while preserving both strict serializability and recoverability.
To evaluate the performance gain with our approach, we extended two existing protocols, Silo and MVTO to include ESSE.
We expect that SSE and ESSE can help accelerate emerging systems with data ingestion queries.

\bibliographystyle{ACM-Reference-Format}
\bibliography{references}

\end{document}